\documentclass[12pt]{article}
\usepackage[a4paper,top=2cm,bottom=2cm,left=2cm,right=2cm]{geometry}
\usepackage{amsmath}
\usepackage{amsthm}
\usepackage{amssymb}

\usepackage{subfig}
\usepackage{graphicx}
\usepackage{booktabs}
\usepackage{braket}
\usepackage{enumerate}
\usepackage[title]{appendix}
\usepackage[colorlinks=true, allcolors=blue]{hyperref}
\usepackage{pgfplots}
\newtheorem{lemma}{Lemma}
\newtheorem{theorem}[lemma]{Theorem}
\usepackage[ruled]{algorithm2e}
\usepackage[most]{tcolorbox}
\newtcbtheorem{mytcbprob}{Problem}{}{problem}

\title{Deterministic quantum search with adjustable parameters: implementations and  applications}
\author{ Guanzhong Li\thanks{Email: ligzh9@mail2.sysu.edu.cn}~ and Lvzhou Li\thanks{Email: lilvzh@mail.sysu.edu.cn (corresponding author)}\\
\small{{\it Institute of Quantum Computing and Computer Theory,}}\\
\small{{\it School of Computer Science and Engineering,}}\\
\small {{\it  Sun Yat-sen University, Guangzhou 510006, China}}}
\date{\today }

\begin{document}
\maketitle

\begin{abstract}
Grover's algorithm provides a quadratic speedup over classical algorithms to search for marked elements in an unstructured database. The original algorithm is probabilistic, returning a marked element with bounded error.
There are several schemes to achieve the deterministic version, by using the generalized Grover's iteration $G(\alpha,\beta):=S_r(\beta)\, S_o(\alpha)$ composed of phase oracle $S_o(\alpha)$ and phase rotation $S_r(\beta)$. 
However, in all the existing schemes the value range of $\alpha$ and $\beta$ is limited; for instance, in the three early schemes $\alpha$ and $\beta$ are determined by the proportion of marked states $M/N$.
In this paper, we break through this limitation by presenting a search framework with adjustable parameters, which allows $\alpha$ or $\beta$ to be arbitrarily given.
The significance of the framework lies not only in the expansion of mathematical form, but also in its application value, as we present two disparate problems which we are able to  solve deterministically using the proposed framework, whereas previous schemes are ineffective.

\end{abstract}

\section{Introduction}
\subsection{Background}
Grover's algorithm \cite{Grover} is one of the most fundamental algorithms in quantum computing, as it provides a quadratic speedup for the unstructured search problem. 
In this  problem, the user is allowed to query an oracle, which outputs information about whether an element is marked, after receiving the element index. In quantum computing, one may input superposition of different indexes to the oracle, and it will also output a superposition of answers. 
Grover's algorithm cleverly makes use of this characteristic and achieves quadratic speedup compared to classical query-based search algorithms. It also works in the very general context, because the oracle is regarded as a black box so that there need not be any promise on the structure of the database. This black box feature makes it easy to generalize Grover's algorithm to quantum amplitude amplification \cite{QAA}, which has become an important subroutine in designing many other quantum algorithms.

 Due to its great importance, research even just on Grover's algorithm itself has never stopped ever since it was proposed. 
One question is whether Grover's algorithm can be made exact, as the original  Grover's algorithm can only output a marked element with a high probability, but not with certainty. To this end, three methods \cite{QAA,arbi_phase,Long} were proposed, which are based on different ideas but all, for the first step, change the two phase flip operations in the original Grover's algorithm to the following general operations: 
\begin{equation}
    S_o(\alpha)=e^{i\alpha\, \Pi_\mathcal{M}},
\end{equation}
\begin{equation}
    S_r(\beta)=e^{-i\beta\, \ket{\psi_0}\bra{\psi_0}}.
\end{equation}
$S_o(\alpha)$ rotates the phase of the marked states by $e^{i\alpha}$, where $\Pi_\mathcal{M}$ represents orthogonal projector onto the subspace spanned by marked states. $S_r(\beta)$
rotates the phase of the initial state $\ket{\psi_0}$ (an equal-superposition of all indexes) by $e^{-i\beta}$.
Note that when $\alpha=\beta=\pi$, the generalized Grover's operation
\begin{equation}
    G(\alpha,\beta):=S_r(\beta)\, S_o(\alpha)
\end{equation}
recovers to the original Grover's iteration. A quantum circuit which realizes the generalized Grover's operation $G(\alpha,\beta)$ using 
twice the standard oracle $O_f$ is shown in Fig. \ref{fig:circuit}.

\begin{figure}[ht]
\centering
\includegraphics[width=0.8\textwidth]{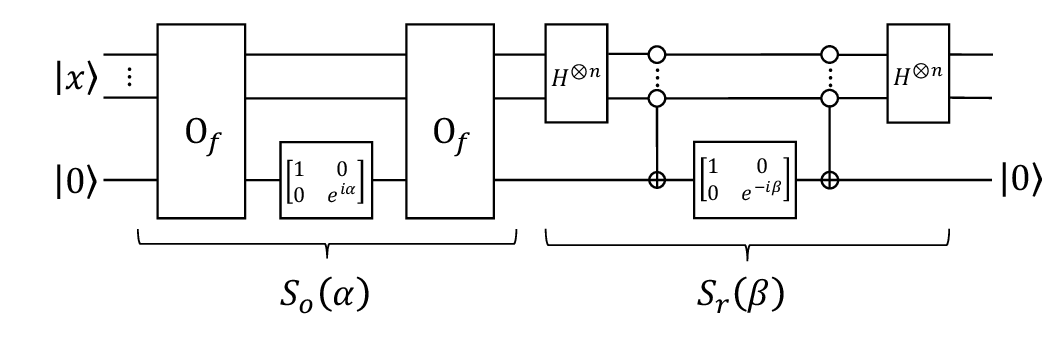}
\caption{\label{fig:circuit} A quantum circuit to realize the generalized Grover's operation $G(\alpha,\beta)=S_r(\beta)\, S_o(\alpha)$. The first $n$ horizontal lines represents working qubits storing $N=2^n$ element indexes, and the last line represents an ancillary qubit initially set to $\ket{0}$. The  standard oracle $O_f$ maps $\ket{x}\ket{b}$ to $\ket{x}\ket{b\oplus f(x)}$, where $\oplus$ is the XOR operation, and the Boolean function $f(x)=1$ iff $x$ represents a marked element index. The vertical line with circles is the multiple controlled-NOT gate which flips the ancillary qubit whenever the first $n$ qubits are all $\ket{0}$. Correctness of the circuit implementing $S_o(\alpha)$ can be easily verified by considering basis state $\ket{x}$ when $f(x)=0$ or $1$ separately. Note that $S_r(\beta)$ is equal to $H^{\otimes n} (I-(1-e^{-i\beta})\ket{0}^{\otimes n} \bra{0}^{\otimes n}) H^{\otimes n}$.  }
\end{figure}

The three methods \cite{QAA,arbi_phase,Long} then choose appropriate parameters $\alpha,\beta,k$ based on different ideas (which we summarize them as `big-small step', `conjugate rotation', and `3D rotation'), so that after $k$ iterations of $G(\alpha,\beta)$ to the initial state $\ket{\psi_0}$, an equal superposition of marked states is produced, and measurement in the computational basis will lead to a marked state with certainty. A summary of them is presented in 
Table \ref{tab:three_exact}.

\begin{table}[ht]
    \centering
    \begin{tabular}{ccc}
         \toprule
         method & procedure & $k$ \\
        \midrule
         big-small step \cite{QAA}
         & $G(\alpha_1,-\beta_1)G^k(\pi,\pi) \ket{\psi_0}$ 
         & $\lfloor k_\mathrm{opt} \rfloor$ \\
         conjugate rotation \cite{arbi_phase}
         & $G^k(\alpha_2,-\beta_2) S_o(u) \ket{\psi_0}$
         & $\lceil k_\mathrm{opt} \rceil$ \\
         3D rotation \cite{Long}
         & $G^k(\alpha_3,-\alpha_3) \ket{\psi_0}$
         & $\lceil k_\mathrm{opt} \rceil$ \\
         \bottomrule
    \end{tabular}
    \caption{Three methods to achieve exact Grover search. Parameter $k_\mathrm{opt}=\pi/2\theta-1/2$, where $\theta=2\arcsin\sqrt{\lambda}$, and $ \lambda=M/N$ is the proportion of marked elements.
    Parameters $(\alpha_1,\beta_1)$ satisfy (are solution to) the complex valued equation ``$(-\cos\theta + i\cot\frac{\alpha_1}{2}) \cot(2k+1)\frac{\theta}{2} = e^{i\beta_1} \sin\theta$''.
    Parameters $(\alpha_2,\beta_2)$ satisfy the equations ``$\sin\frac{\alpha_2}{2} \sin\theta = \sin\frac{\pi-\theta}{2k}$'' and ``$\tan\frac{\beta_2}{2} = \tan\frac{\alpha_2}{2} \cos\theta$''. Parameter $u=\frac{\pi-\beta_2}{2}$.
    Parameter $\alpha_3$ satisfies the equation ``$\sin\frac{\pi}{4k+2} = \sin\frac{\alpha_3}{2} \sin\frac{\theta}{2}$''.}
    \label{tab:three_exact}
\end{table}

Compared to the original Grover's algorithm, the three methods produce a marked state with certainty while maintaining the quadratic speedup (note that $k_\text{opt}\approx \frac{\pi}{4}\sqrt{\frac{M}{N}}$ ), but they require knowledge of the ratio $\lambda=M/N$. In fact, the ratio $\lambda$ is {\it necessary} to design quantum exact search algorithm, as it has been proved in Ref. \cite{polynomial} that any quantum algorithm for exact search without knowing the number $M$ of marked elements requires $N$ queries (by reducing the computation of Boolean function $OR_N$ to exact search, and proving the former requires $N$ queries to achieve certainty).

\subsection{Contributions}
As can be seen from Table \ref{tab:three_exact}, the parameters $\alpha,\beta$ in $G(\alpha,\beta)=S_r(\beta)\, S_o(\alpha)$ are fixed once the ratio $\lambda$ is known. One question now naturally arises as follows.

\noindent\textbf{The question:} \textit {Is it possible to achieve exact search when one of the parameters $\alpha,\beta$ in $G(\alpha,\beta)$ is arbitrarily given? Or to be more precise, when the angle $\alpha$ in phase oracle $S_o(\alpha)$, or the angle $\beta$ in phase rotation  $S_r(\beta)$ is arbitrarily given? }

In this paper, we answer the above question affirmatively. Specifically, we have the following two theorems for the two cases in which $\alpha$ or $\beta$ is arbitrarily given:
\begin{theorem}\label{thm:main}
Given the phase oracle $S_o(\alpha)$ with an arbitrary angle $\alpha\in (0,2\pi)$,  whenever $k > k_{\mathrm{lower}}$,  there always exits a pair of parameters $(\beta_1,\beta_2)$, such that applying $F_{xr,\alpha} := G(\alpha,\beta_2) G(\alpha,\beta_1)$ to $\ket{\psi_0}$ for $k$ times will produce an equal superposition of marked states. 
The lower bound of  $k$ is
\begin{equation}
    k_{\mathrm{lower}} = \frac{\pi}{ \Big| 4\arcsin(\sqrt{\lambda}\sin\frac{\alpha}{2}) \mod [-\frac{\pi}{2},\frac{\pi}{2}] \Big| }\in O(1/\sqrt{\lambda}), \label{k_lower_intro}
\end{equation} where $\lambda\in(0,1)$ is the proportion of marked elements,  and the notation ``$x \mod [-\frac{\pi}{2},\frac{\pi}{2}]$'' means to add $x$ with an appropriate integer multiples $l$ of $\pi$, such that $x+l\pi\in [-\frac{\pi}{2},\frac{\pi}{2}]$. 
\end{theorem}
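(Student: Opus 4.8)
\emph{Proof plan.} The plan is to collapse the problem to one qubit and then reason on its Bloch sphere. First I would observe that $S_o(\alpha)=e^{i\alpha\Pi_{\mathcal{M}}}$ and $S_r(\beta)=e^{-i\beta\ket{\psi_0}\bra{\psi_0}}$ both preserve the two--dimensional space $\mathcal{H}_2:=\mathrm{span}\{\ket{M},\ket{U}\}$, where $\ket{M}$ is the normalized uniform superposition of the marked basis states (so $\Pi_{\mathcal{M}}\ket{\psi_0}=\sqrt\lambda\,\ket{M}$) and $\ket{U}$ its orthogonal partner inside $\mathrm{span}\{\ket{\psi_0},\Pi_{\mathcal{M}}\ket{\psi_0}\}$; since $\ket{\psi_0}\in\mathcal{H}_2$, every iterate $F_{xr,\alpha}^{\,j}\ket{\psi_0}$ stays in $\mathcal{H}_2$, so it suffices to work in $\mathcal{H}_2\cong\mathbb{C}^2$, where producing $\ket{M}$ up to a global phase is precisely producing an equal superposition of marked states. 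Identifying $\ket{M},\ket{U}$ with the poles of the Bloch sphere, $\ket{\psi_0}$ sits at Bloch polar angle $\pi-\theta$ from $\ket{M}$; modulo global phases, $S_o(\alpha)$ is the rotation by $\alpha$ about the axis $\hat z$ through $\ket{M}$ and $S_r(\beta)$ the rotation by $\beta$ about the axis $\hat n_0$ through $\ket{\psi_0}$. Hence $G(\alpha,\beta)$, and with it $\mathcal{F}:=F_{xr,\alpha}$, descend to $SO(3)$, and the goal becomes: find $(\beta_1,\beta_2)$ so that the rotation $\mathcal{F}$ obeys $\mathcal{F}^{\,k}\hat n_0=\hat z$.

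Next I would extract the governing geometry. If $\mathcal{F}$ is a rotation by angle $g$ about axis $\hat m$, then since each power $\mathcal{F}^{\,j}$ preserves $\hat m$--latitude, $\mathcal{F}^{\,k}\hat n_0=\hat z$ forces $\hat m\cdot(\hat n_0-\hat z)=0$, i.e. $\hat m$ must lie on the fixed great circle $\mathcal{C}$ perpendicular to $\hat n_0-\hat z$; and given that, $\mathcal{F}^{\,k}\hat n_0=\hat z$ is equivalent to $k\,g\equiv\delta\ (\mathrm{mod}\ 2\pi)$, where $\delta=\delta(\hat m)$ is the azimuth about $\hat m$ carrying $\hat n_0$ to $\hat z$. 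The crux --- the ``fixed--axis--rotation'' device --- is then that imposing ``$\hat m\in\mathcal{C}$'' is a single scalar equation on the pair $(\beta_1,\beta_2)$: I would compute the unit quaternion of $\mathcal{F}=R_{\hat n_0}(\beta_2)R_{\hat z}(\alpha)R_{\hat n_0}(\beta_1)R_{\hat z}(\alpha)$, require its vector part to be orthogonal to $\hat n_0-\hat z$, and solve, getting $\beta_2$ as an explicit function of $\beta_1$. This leaves a one--parameter family of admissible pairs along which $\hat m$ sweeps an arc of $\mathcal{C}$ and $g,\delta$ vary continuously; the quantity $\arcsin(\sqrt\lambda\sin\tfrac{\alpha}{2})$ appears here as the natural half--angle (the arbitrary--$\alpha$, two--phase analogue of the one--parameter scheme $G(\alpha,-\alpha)^k$), and $g$ turns out to cover an interval whose extreme value is $\bigl|4\arcsin(\sqrt\lambda\sin\tfrac{\alpha}{2})\ \mathrm{mod}\ [-\tfrac{\pi}{2},\tfrac{\pi}{2}]\bigr|$, the modular reduction being forced by the $SU(2)/SO(3)$ double cover together with the branch of $\arcsin$ entering a two--fold step.

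Finally I would close by continuity: as $\beta_1$ runs over the admissible interval, $(g(\beta_1),\delta(\beta_1))$ traces a continuous curve, and $k\,g\equiv\delta\ (\mathrm{mod}\ 2\pi)$ becomes solvable (by the intermediate value theorem) exactly once $k$ is large enough to overcome the extremal configuration of that curve --- a short computation shows this is precisely $k>k_{\mathrm{lower}}$ with $k_{\mathrm{lower}}$ as in \eqref{k_lower_intro}, and then $k_{\mathrm{lower}}\in O(1/\sqrt\lambda)$ since $\arcsin(\sqrt\lambda\sin\tfrac{\alpha}{2})=\Omega(\sqrt\lambda)$. Any admissible $\beta_1$ with its partner $\beta_2=\beta_2(\beta_1)$ then yields $\mathcal{F}^{\,k}\hat n_0=\hat z$, i.e. $F_{xr,\alpha}^{\,k}\ket{\psi_0}=\ket{M}$ up to a phase, the asserted equal superposition of marked states.

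The qubit reduction and the $SO(3)$/quaternion bookkeeping are routine. I expect the real obstacle to be the middle step: showing that, under the constraint $\hat m\in\mathcal{C}$, the pair $(\beta_1,\beta_2)$ genuinely sweeps a full arc of rotations about axes on $\mathcal{C}$, and computing the reachable $(g,\delta)$--range sharply enough to obtain the \emph{exact} threshold \eqref{k_lower_intro} --- in particular the factor $4$ and the ``$\mathrm{mod}\,[-\tfrac{\pi}{2},\tfrac{\pi}{2}]$'' prescription --- rather than a weaker sufficient bound.
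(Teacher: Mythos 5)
Your skeleton coincides with the paper's: collapse to the two-dimensional invariant subspace, read $S_o(\alpha)$ and $S_r(\beta)$ as Bloch-sphere rotations about the marked/unmarked axis and about $\ket{\psi_0}$ respectively, impose that the axis of $F_{xr,\alpha}$ be equidistant from $\ket{\psi_0}$ and the target --- this is exactly the paper's Eq.~\eqref{im1}, solved as $\beta_2=f(\beta_1)$ in Eq.~\eqref{2f1} --- and then run an intermediate-value argument in $\beta_1$ along that constraint curve. You also correctly locate the origin of the quantity $4\arcsin(\sqrt{\lambda}\sin\frac{\alpha}{2})$ in the Long-type pair $G(\alpha,-\alpha)$; this is the paper's Lemma~\ref{lem:another_pair}, where $\beta_1''=-\alpha$ is shown to lie on the constraint curve and to give rotation angle $8\phi'$ with $\sin\phi'=\sqrt{\lambda}\sin\frac{\alpha}{2}$.

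The genuine gap is the step you yourself flag: you assert, but do not establish, that the rotation angle $g$ actually sweeps an interval reaching down to the value needed for $k g$ to hit the required azimuth. The paper closes this with a second, more delicate anchor point: Lemma~\ref{thm:phi zero} shows there is a pair $(\beta_1',\beta_2')$ \emph{on the constraint curve} at which $F'_{xr,\alpha}$ is the identity, i.e.\ $\phi=0$. The argument is genuinely geometric (choose $\beta_1'$ so that $R_{\psi_0}(\beta_1')$ carries the image of $\ket{\psi_0}$ under $R_{R}(\alpha)$ to the second intersection point of the two circular orbits, whence the four-fold composite fixes $\ket{\psi_0}$; since the axis is confined to the bisector plane it cannot be collinear with $\ket{\psi_0}$, so the rotation must be trivial) and is not a routine computation. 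Without this $\phi=0$ endpoint the intermediate-value argument has only one end: knowing that $g$ attains something congruent to $4\phi'$ at $\beta_1=-\alpha$ does not give you a $\beta_1'''$ with $g(\beta_1''')=\pi/k$, and the sign change the paper exploits [$F(\beta_1')=\sqrt{1-\lambda}>0$ versus $F(\beta_1''')=-\sqrt{1-\lambda}<0$] never materializes. Two further points your plan glosses over but the paper must handle: the continuity of $\beta_2=f(\beta_1)$ across the branch jumps of the $\arctan$ parametrization (Lemma~\ref{lem:continuum_f}), and the $\pm I$ ambiguity from the $SU(2)\to SO(3)$ double cover, which is what forces the absolute value and the ``$\bmod\,[-\frac{\pi}{2},\frac{\pi}{2}]$'' reduction in Eq.~\eqref{k_lower_intro} and is resolved in the paper by shifting $\beta_2'$ by $2\pi$ where necessary.
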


\begin{theorem}\label{thm:main2}
Given the phase rotation   $S_r(\beta)$ with an arbitrary angle $\beta\in (0,2\pi)$, whenever $k > k_{\mathrm{lower}}$, there always exits a pair of parameters $(\alpha_1,\alpha_2)$, such that applying $F_{xr,\beta} := G(\alpha_1,\beta) G(\alpha_2,\beta)$ to $\ket{\psi_0}$ for $k$ times will produce an equal superposition of marked states. 
The lower bound of  $k$ is
\begin{equation}
    k_{\mathrm{lower}} = \frac{\pi}{ \Big| 4\arcsin(\sqrt{\lambda}\sin\frac{\beta}{2}) \mod [-\frac{\pi}{2},\frac{\pi}{2}] \Big| }\in O(1/\sqrt{\lambda}). \label{k_lower2_intro}
\end{equation}
\end{theorem}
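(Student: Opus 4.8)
\emph{Step 1 (reduce to one qubit).} I would work inside the invariant two-dimensional subspace $V=\mathrm{span}\{\ket g,\ket b\}$, where $\ket g,\ket b$ are the normalized uniform superpositions of the marked and of the unmarked indices; since $\ket{\psi_0}=\sqrt\lambda\,\ket g+\sqrt{1-\lambda}\,\ket b\in V$ and $S_o(\alpha),S_r(\beta)$ act as the identity on $V^\perp$, the whole evolution stays in $V$. Identifying $V\cong\mathbb C^2$ with $\ket g$ at the north pole, one has $S_o(\alpha)=e^{i\alpha/2}R_{\hat g}(-\alpha)$ and $S_r(\beta)=e^{-i\beta/2}R_{\hat\psi_0}(\beta)$, i.e. Bloch rotations by $-\alpha$ about the $\ket g$-axis $\hat g$ and by $\beta$ about the $\ket{\psi_0}$-axis $\hat\psi_0$, the two axes making an angle $\pi-\theta$ on the sphere, $\theta=2\arcsin\sqrt\lambda$. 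Hence $F_{xr,\alpha}=G(\alpha,\beta_2)G(\alpha,\beta_1)$ is, up to an irrelevant global phase, a single Bloch rotation $R_{\hat a}(\omega)$ with $\hat a=\hat a(\beta_1,\beta_2)$ and $\omega=\omega(\beta_1,\beta_2)$ to be made explicit, and ``$F_{xr,\alpha}^k\ket{\psi_0}\propto\ket g$'' becomes ``$R_{\hat a}(k\omega)$ maps the Bloch vector $\vec\psi_0$ to $\vec g$''.

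\emph{Step 2 (fixed-axis criterion and parameter count).} A rotation $R_{\hat a}(k\omega)$ sends $\vec\psi_0$ to $\vec g$ precisely when (i) $\hat a\cdot(\vec g-\vec\psi_0)=0$, i.e. $\hat a$ lies on the perpendicular-bisector great circle $C$ of $\vec\psi_0$ and $\vec g$ (the circle through $\pm\hat y$, $\hat y$ being the axis of the ordinary Grover rotation), so that $\vec\psi_0,\vec g$ lie on a common latitude about $\hat a$; and (ii) $k\omega\equiv\Psi(\hat a)\pmod{2\pi}$, where $\Psi(\hat a)$ is the longitude from $\vec\psi_0$ to $\vec g$ about $\hat a$. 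I would then obtain $\hat a(\beta_1,\beta_2)$ and $\omega(\beta_1,\beta_2)$ by multiplying the unit quaternions of $G(\alpha,\beta_1)$ and $G(\alpha,\beta_2)$ (each has scalar part $\cos\tfrac{\beta_i}2\cos\tfrac\alpha2-\sin\tfrac{\beta_i}2\sin\tfrac\alpha2\cos\theta$ and a vector part with an $\hat y$-component proportional to $\sin\tfrac{\beta_i}2\sin\tfrac\alpha2$; in particular a single $G(\alpha,\beta)$ generically fails (i), and even when it meets it the angle is then pinned — which is precisely why two factors are taken). Condition (i) is one real equation in $(\beta_1,\beta_2)$, cutting out a one-parameter family along which $\hat a$ sweeps $C$ and the pair $\bigl(\omega,\Psi(\hat a)\bigr)$ traces a curve. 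This is the meaning of the \emph{fixed-axis-rotation} construction: one free phase pins the rotation axis onto $C$, the other sets the rotation angle.

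\emph{Step 3 (solvability and the bound).} For a given $k$ I need a point of that family with $k\omega\equiv\Psi(\hat a)\pmod{2\pi}$. Since the longitude to be realized stays bounded (the required rotation along $C$ never exceeds $\pi$ in magnitude), while $k\omega$ sweeps an interval of length $\approx k\times(\text{attainable spread of }\omega)$ as the family parameter varies, the congruence becomes solvable once $k$ is large enough. The crux of the proof is to compute that spread exactly: I claim that, subject to (i), the attainable per-iteration rotation angle — equivalently the controllable range of $\cos\tfrac\omega2$ — has total extent $2\bigl|4\arcsin(\sqrt\lambda\sin\tfrac\alpha2)\bmod[-\tfrac\pi2,\tfrac\pi2]\bigr|$, the ``$\bmod[-\tfrac\pi2,\tfrac\pi2]$'' arising from the branch of arcsine forced by $\sqrt\lambda\sin\tfrac\alpha2\in(0,1)$ together with the half-angles intrinsic to the quaternion calculus. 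This yields $k_{\mathrm{lower}}=\pi/\bigl|4\arcsin(\sqrt\lambda\sin\tfrac\alpha2)\bmod[-\tfrac\pi2,\tfrac\pi2]\bigr|$; for small argument $4\arcsin(\sqrt\lambda\sin\tfrac\alpha2)\approx4\sqrt\lambda\sin\tfrac\alpha2$ is unchanged by the reduction, so $k_{\mathrm{lower}}\in O(1/\sqrt\lambda)$, and reading off the family parameter (hence $(\beta_1,\beta_2)$) at a valid point finishes Theorem~\ref{thm:main}. I expect this step — the clean quaternion bookkeeping and, above all, pinning the exact ``$\bmod[-\tfrac\pi2,\tfrac\pi2]$'' extremum rather than a crude $O(1/\sqrt\lambda)$ estimate — to be the main obstacle; that $k$ must be an integer strictly exceeding $k_{\mathrm{lower}}$ is harmless for the speedup.

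\emph{Step 4 (Theorem~\ref{thm:main2} by duality).} Finally I would deduce Theorem~\ref{thm:main2} from Theorem~\ref{thm:main} by taking adjoints: $F_{xr,\beta}^k\ket{\psi_0}\propto\ket g$ is equivalent to $(F_{xr,\beta}^\dagger)^k\ket g\propto\ket{\psi_0}$, and $F_{xr,\beta}^\dagger=S_o(-\alpha_2)\,S_r(-\beta)\,S_o(-\alpha_1)\,S_r(-\beta)$ has exactly the operator structure of Theorem~\ref{thm:main} — the fixed phase $-\beta$ on the two $S_r$ factors, free phases on the two $S_o$ factors — with the roles of $S_o$ and $S_r$ (hence of $\ket g$ and $\ket{\psi_0}$) interchanged, now carrying the ``initial'' state $\ket g$ to the ``target'' $\ket{\psi_0}$. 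Because $|\braket{g|\psi_0}|=\sqrt\lambda$ is symmetric and the bound \eqref{k_lower_intro} depends on the fixed angle only through $\sin(\cdot/2)$ up to sign, substituting $-\beta$ for $\alpha$ turns $k_{\mathrm{lower}}$ into \eqref{k_lower2_intro}, which is Theorem~\ref{thm:main2}.
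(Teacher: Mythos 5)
Your Steps 1 and 2 match the paper's setup: reduction to the two\mbox{-}dimensional invariant subspace, the Bloch-sphere rotation picture, and the observation that the axis condition $\hat a\cdot(\vec g-\vec\psi_0)=0$ is exactly the paper's Eq.~\eqref{im1}, cutting out a one-parameter family along which the remaining angle equation \eqref{real1} must be solved. Your Step 4 is a genuinely different route to Theorem~\ref{thm:main2}: instead of re-running the whole argument with the roles of $\alpha$ and $\beta$ exchanged (which is what the paper does in Section~\ref{subsec:differ}, noting e.g.\ that one must solve for $\alpha_1=f(\alpha_2)$ and use the special point that fixes $\ket R$ rather than $\ket{\psi_0}$), you take adjoints and observe that $F_{xr,\beta}^\dagger=S_o(-\alpha_2)S_r(-\beta)S_o(-\alpha_1)S_r(-\beta)$ has the Theorem~\ref{thm:main} structure with initial and target states swapped. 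This is sound provided Theorem~\ref{thm:main} is read as a statement about the abstract two-dimensional configuration (fixed-angle rotation about the target's axis, variable-angle rotations about the initial state's axis, overlap $\sqrt\lambda$), which its proof does support; it is arguably cleaner than the paper's ``same proof with noted differences.''

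The genuine gap is Step 3, which is the heart of the proof and which you leave as a claim. To run the intermediate-value argument one must establish three things you assert but do not prove: (a) that the solution set of the axis condition is a connected continuous curve in the parameter plane --- the paper needs a separate case analysis for this (Lemma~\ref{lem:continuum_f}), because the naive $\arctan$ parametrization of Eq.~\eqref{2f1} has jump discontinuities that must be patched by $2\pi$ translations; (b) that this curve contains a point where the composed rotation is the identity, $\omega=0$ --- this is the paper's Lemma~\ref{thm:phi zero}, proved by a non-obvious three-dimensional argument (choose the variable angle so that the state is carried to the \emph{second} intersection point of two circles and then back, and invoke the fact that a nontrivial rotation fixing a point off its axis must be trivial, plus a sign correction on $\pm I$); and (c) that the curve contains a point where the per-iteration angle reaches $2\bigl|4\arcsin(\sqrt\lambda\sin\tfrac\beta2)\bmod[-\tfrac\pi2,\tfrac\pi2]\bigr|$ --- this is the paper's Lemma~\ref{lem:another_pair}, obtained from the specific choice $\alpha_2''=-\beta$ (Long's opposite-phase trick), which happens both to satisfy the axis condition and to compose two equal-axis rotations of angle $4\phi'$ each, with $\sin\phi'=\sqrt\lambda\sin\tfrac\beta2$. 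Without (a)--(c) the ``attainable spread of $\omega$,'' and hence the exact value of $k_{\mathrm{lower}}$, is unsubstantiated; you correctly flag this as the main obstacle, but the proposal does not overcome it. A further small inaccuracy: the paper shows only that the range of the half-angle \emph{contains} $[0,\phi_0]$ and explicitly notes (Fig.~\ref{fig:phi_g_theta}) that it may fail to contain the complementary value $\pi-\phi_0$, so ``total extent'' should be ``contains an interval of length.''
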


In Section \ref{sec:app}, two applications of the above results will be presented. Here we have a look at the proof outline.

{\it Proof outline of Theorem~\ref{thm:main} and \ref{thm:main2}:}
The proof is divided into two parts: 

1.  To derive the explicit equations that parameters $(k,\beta_1,\beta_2)$ or $(k,\alpha_1,\alpha_2)$ need to satisfy. This is done in Section~\ref{sec:d2p} with the final equations shown by Eqs.~\eqref{eq:intro_fxr_alpha_real},\eqref{eq:intro_fxr_alpha_im} for $(k,\beta_1,\beta_2)$; and Eqs.~\eqref{eq:intro_fxr_beta_real},\eqref{eq:intro_fxr_beta_im} for $(k,\alpha_1,\alpha_2)$. We first determine the effect of $F_{xr}=G(\alpha_2,\beta_2) G(\alpha_1,\beta_1)$ in its invariant subspace spanned by $\ket{R}$ and $\ket{T}$, which are the equal-superposition of unmarked and marked states respectively.
In fact, $F_{xr}$ can be seen as a rotation $R_{\vec{n}}(\phi)$ composing of four rotations $S_r(\beta_2) S_o(\alpha_2) S_r(\beta_1) S_o(\alpha_1)$ on the Bloch sphere of basis $\{\ket{R},\ket{T}\}$, so we can then calculate the rotation parameters $\vec{n}$ and $\phi$ of $F_{xr,\alpha}$ and $F_{xr,\beta}$ (Lemma~\ref{lem:decomp_fxr}).
The Bloch sphere interpretation of single-qubit operations provides us with the tool to deal with the exponentiation $F_{xr}^k$, because of the equation $R^k_{\vec{n}}(\phi) = R_{\vec{n}}(k\phi)$.
Then by considering the real and imaginary part of $0 = \bra{R} F_{xr,\alpha}^k \ket{\psi_0}$ or $0 = \bra{R} F_{xr,\beta}^k \ket{\psi_0}$, we obtain the equations that parameters $(k,\beta_1,\beta_2)$ or $(k,\alpha_1,\alpha_2)$ need to satisfy. 

2. To prove that whenever $k$ is greater than $k_{\mathrm{lower}}$, there exists a solution $(\beta_1,\beta_2)$ or $(\alpha_1,\alpha_2)$ to their corresponding equations.
This part is quite technical and is shown in Section~\ref{sec:iterations k}, where we make use of 3D geometry (Lemma~\ref{thm:phi zero}) and the intermediate value theorem of continuous function on closed intervals.
We only show the proof for the case where $\alpha$ is fixed, since the proof for $\beta$ being fixed is almost the same, but we point out some differences in Section~\ref{subsec:differ}.


\textbf{Remark:} We call the procedure in Theorem~\ref{thm:main} and \ref{thm:main2} {\it the fixed-axis-rotation (FXR) method}, because the exponentiation $F_{xr}^k$ can be seen as rotating about a fixed axis for $k$ times on the Bloch sphere.
The peculiar notation in the denominator of Eq.~\eqref{k_lower_intro} only takes effect when $\lambda$ is large, because when $\lambda$ is small, $4\arcsin(\sqrt{\lambda}\sin\frac{\alpha}{2})$ will always be in $[0,\frac{\pi}{2}]$, and $k_{\mathrm{lower}}$ will be of order $O(1/\sqrt{\lambda})$, maintaining the quadratic speedup. The same goes for Eq.~\eqref{k_lower2_intro}.

\subsection{Related work and paper structure}
A special case of Theorem~\ref{thm:main} with $\alpha=\pi$  was considered by Roy et al. in \cite{d2p}, where  the deterministic two-parameter (D2p) protocol was proposed. The D2p protocol obtains zero failure rate by applying $G(\pi,\beta_1)$ and $G(\pi,\beta_2)$, alternatively, to the initial state $\ket{\psi_0}$ until $k\geq\lceil k_{\mathrm{opt}} \rceil$ oracle queries are made. When $k$ is even, the last iteration is $G(\pi,\beta_2)$; when $k$ is odd, the last iteration is $G(\pi,\beta_1)$ and the equations  which parameters $(\beta_1,\beta_2)$ need to satisfy are more complicated.
Compared to the D2p protocol, our FXR method iterates $F_{xr}$ as one piece for $k$ times, regardless of the parity of $k$, therefore only one system of equations needs to be solved to obtain parameters $(\beta_1,\beta_2)$.
Moreover, they claimed that the equations can always be solved when $k\geqslant k_{\text{opt}}$ and $\lambda\leqslant 1/4$, but {\it lack} rigorous proof.

 The rest of this paper is organized as follows. 
We first give two applications of our results in Section~\ref{sec:app}. The first uses Theorem~\ref{thm:main} and the other uses Theorem~\ref{thm:main2}.
In Section~\ref{sec:d2p} we describe the intuition behind our FXR method,
and derive the equations that parameters $(k,\beta_1,\beta_2)$ or $(k,\alpha_1,\alpha_2)$ need to satisfy. In Section~\ref{sec:iterations k} we prove the existence of solution $(\beta_1,\beta_2)$ or $(\alpha_1,\alpha_2)$ to the corresponding equations whenever $k$ is greater than $k_{\mathrm{lower}}$. Conclusion and future work are discussed in Section~\ref{sec:conclusion}.

\section{Applications}\label{sec:app}
In this section, we present two applications of the search framework with adjustable parameters. The first uses Theorem~\ref{thm:main}, and the second uses Theorem~\ref{thm:main2}.
It can be seen especially from the second application that the search framework can be deployed in designing quantum exact algorithms no longer restricted to Grover search problem, and the key step is to reduce the problem under consideration to the problem of transforming initial state to target state in a two-dimensional subspace with some obtained $G(\alpha,\beta)$.
We believe that the two applications shown below are only tip of the iceberg demonstrating the usefulness of the search framework, and that more interesting ones will be found in the future.

\subsection{Identifying secret string exactly with Hamming distance oracle}
Hunziker and Meyer \cite{highly} considered the problem of identifying a secret string $s\in[k]^n$, by querying the Hamming distance oracle $h_s(x)=dist(x,s) \,\text{mod}\,r$, where $r=\max\{2,6-k\}$, and $dist(x,s)$ is the generalized Hamming distance between the query string $x$ and $s$, i.e., the number of components at which they differ.  When $k\in\{2,3,4\}$,   Algorithm B  proposed in \cite{highly} requires only one query to identify $s$ exactly. However, the case of $k>4$ is more complicated, and   we recall the main idea proposed in \cite{highly}  for this case in the following, where the second step has a flaw and can be amended by using Theorem~\ref{thm:main}.
\begin{enumerate}
    \item [(i)]  Algorithm C  proposed in \cite{highly}  can be viewed as $n$ synchronous Grover search on each position of $s$. In the $i$-th position, a Grover search with $O(\sqrt{k})$ iterations is performed for finding one marked item out of $k$ items, and thus $s_i$ is identified correctly with probability at worst $1-\frac{1}{k}$ \cite{boyer1998tight}. As a result, the probability of identifying $s$ is at least $(1-\frac{1}{k})^n$  which is bounded above $ \frac{1}{2}+\epsilon$ when $n < -k\ln(\frac{1}{2} + \epsilon)$.

\item [(ii)] The Grover search on each position can be adjusted to succeed with probability $1$ by using one of the three methods in Table~\ref{tab:three_exact}.  

\item[(iii)] Thus, the condition $n < -k\ln(\frac{1}{2} + \epsilon)$ can be removed, which leads to a quantum algorithm identifying $s$ with certainty and consuming $O(\sqrt{k})$ queries.

\end{enumerate}

\begin{algorithm}[htb]
    \SetKwInput{Runtime}{Runtime}
     \SetKwInOut{KWProcedure}{Procedure}
    \caption{Algorithm C in \cite{highly} }
    \label{algorithm:meyer}
    \LinesNumbered
    \KwIn {An oracle $O_{h_s}$ for $s\in [k]^n$ such that $O_{h_s} | x \rangle | b \rangle = | x \rangle | b \oplus h_s(x) \rangle$.}
    \KwOut {The secret string $s$.}
    \Runtime {$O(\sqrt{k})$ queries to $O_{h_s}$.  Succeeds with probability at least $\frac{1}{2} + \epsilon $ when $n < -k\ln(\frac{1}{2} + \epsilon)$.}
    
    \KWProcedure{}
    Initial the state to $ | 0 \rangle ^{\otimes n} | 0 \rangle \in (C_k)^{\otimes n}\otimes C_2$.
    
    Apply the unitary transformation $QFT_{k}^{\otimes n} \otimes (HX)$
    
    \For{ $i = 1 :\lfloor \frac{1}{2}(\pi / (2\arcsin(\frac{1}{\sqrt{k}})) - 1)\rceil $}{
        apply the oracle $O_{h_s}$.
        
        apply the unitary transformation $(QFT_k (I - 2|0 \rangle \langle 0 |)QFT_k^{\dagger})^{\otimes n } \otimes I$
    }
    Measure the first $n$ registers.
\end{algorithm}

However, as pointed out in \cite{mastermind}, the second item above is not true, that is, the algorithm  cannot be adjusted to succeed with probability $1$ by using the three methods in Table~\ref{tab:three_exact}. We explain the reason below. 

First note that when $k>4$, we have $h_{s}(x) = ({n - \sum_{i = 1}^n \delta_{s_ix_i}}) \bmod 2$, and the quantum oracle works as  $O_{h_s} | x \rangle | b \rangle = | x \rangle | b \oplus h_s(x) \rangle $ where $\oplus$ denotes the bitwise XOR. More specifically,  the oracle  $O_{h_a}$  works as follows:
\begin{align}
    O_{h_s} (|x \rangle \otimes \ket{-})
    \label{equation:meyer1}& = | x \rangle \otimes \frac{1}{\sqrt{2}}(| 0 \oplus h_s(x) \rangle - | 1 \oplus h_s(x) \rangle)\\
    \label{equation:meyer2}& = (-1)^{h_s(x)} | x \rangle \otimes \ket{-}\\ 
    \label{equation:meyer3}
    & = (-1)^{(n - \sum_{i = 1}^n \delta_{s_ix_i}) \bmod 2} | x \rangle \otimes \ket{-}\\ 
    \label{equation:meyer3.5}
    & = (-1)^{n - \sum_{i = 1}^n \delta_{s_ix_i}} | x \rangle \otimes\ket{-}\\ 
    \label{equation:meyer4}
    & = (-1)^n\mathop{\bigotimes}_{i = 1}^{n} (-1)^{\delta_{s_ix_i}}|x_i\rangle \otimes \ket{-},
\end{align}
where $\ket{-}$ denotes the state $\frac{1}{\sqrt{2}}(\ket{0}-\ket{1})$. 
The trick employed there is phase pick-back as shown in Eq.~\eqref{equation:meyer1} and Eq.~\eqref{equation:meyer2}, adding a fixed phase $-1$ to $| x_i \rangle$ when $x_i$ is equal to $s_i$. It should be pointed out that  Eq.~\eqref{equation:meyer3.5} holds as  $
(-1)^{ l} = (-1)^{ l\bmod 2}$ for any $0\leq l\leq n$, but it will not hold if we replace $-1$ with $ e^{i\phi}$ for general $\phi$, since $ e^{i\phi l} = e^{i\phi l\bmod 2}$ no longer holds.
On the other hand, the three methods in Table~\ref{tab:three_exact} requires the phase oracle $S_o(\alpha)$ to have arbitrary user-controllable angle $\alpha$. That is why the step in the second item above cannot be adjusted to succeed with probability $1$ by using the three methods in Table~\ref{tab:three_exact}.

Nevertheless, we can use the FXR method instead, by setting the phase oracle's angle $\alpha=\pi$ and $\lambda=1/k$ in Theorem~\ref{thm:main}. We will then identify the secret string $s$ by measuring the first $n$ registers, and what's more, the query complexity remains $O(\sqrt{k})$ by Eq.~\eqref{k_lower_intro}.

\subsection{Solving element distinctness promise problem deterministically}
The element distinctness problem is to determine whether a string $x=(x_1,x_2,\ldots,x_N)\in [M]^N$ of $N$ elements, where $[M]=\{1,2,\ldots,M\}$ and $M\geq N$, contains two elements of the same value.
More precisely, we are given a black box (oracle) $O_x$ that when queried index $i$ of the unknown string $x=(x_1,x_2,\ldots,x_N)$, it will output its value $x_i$, and the task now is to determine whether the string $x$ contains two equal items $x_{i}=x_{j}$ (called colliding pair), with as few queries to the index oracle $O_x$ as possible.

The problem is well-known both in classical and quantum computation. The classical bounded-error query complexity (a.k.a decision tree complexity, see \cite{decision_tree_survey} for more information) is shown to be $\Theta(N)$ by a trivial reduction from the OR-problem \cite{quantum_alg1.2}. In comparison, Ambainis \cite{quantum_alg2.1,quantum_alg2.2} proposed a $O(N^{2/3})$-query quantum algorithm with bounded-error, and it is optimal: the bounded-error quantum query complexity of element distinctness problem has previously been shown to be $\Omega(N^{2/3})$ \cite{quantum_lower1,quantum_lower2,quantum_lower3} by the polynomial method \cite{polynomial}.
The quantum algorithm proposed by Ambainis \cite{quantum_alg2.1,quantum_alg2.2} is developed in two steps:
\begin{enumerate}[(i)]
    \item Ambainis first designs a $O(N^{2/3})$-query quantum algorithm $\mathcal{A}$ which solves the simple case ({\it promise problem}) in which string $x$ contains at most one colliding pair, by using quantum walk search on Johnson graph. 
    \item He then reduces the general case to this simple case, by sampling a sequence of exponentially shrinking subsets to run $\mathcal{A}$. 
    By showing the probability that the sequence has a subset containing at most one colliding pair is greater than $1/2$, and the overall query complexity remains the same order, Ambainis successfully solves the element distinctness problem with query complexity $O(N^{2/3})$.
\end{enumerate}
This two stage process shows the {promise problem} that algorithm $\mathcal{A}$ solves is important. We formalize it in the following box.

\begin{mytcbprob*}{the element distinctness promise problem}
\textbf{input:} index oracle $O_x$ hiding the unknown string $x=(x_1,x_2,\ldots,x_N)\in [M]^N$.

\textbf{promise:} $x$ contains exactly one colliding pair $x_i=x_j$ with $i\neq j$; or $x_i$s are distinct for $i\in[N]$.

\textbf{output:} indexes $(i,j)$ of the colliding pair $x_i=x_j$; or ``all distinct”.
\end{mytcbprob*}

However, $\mathcal{A}$ is a one-sided error algorithm: if the string $x$ contains a colliding pair $x_i=x_j$, then $\mathcal{A}$ might asserts that $x$ is ``all distinct'', making a mistake; but if elements in $x$ are all distinct, then $\mathcal{A}$ won't err.
This motivates us to develop in \cite{eedp} a quantum exact algorithm that solves the element distinctness promise problem with $100\%$ success and has the same query complexity $\Theta(N^{2/3})$.
The design of our quantum exact algorithm can be roughly divided into the following three steps:
\begin{enumerate}
    \item We first reduce the problem to quantum walk search on quasi-Johnson graph $G$, based on Portugal's staggered quantum walk model \cite{Portugal_2018,Portugal_book}, but we introduce arbitrary phase angles in the two local diffusion operators, in order to achieve certainty of success in the next two steps.
    \item We then reduce the quantum walk search on $G$ to exact Grover search problem with an arbitrarily given phase rotation $S_r(\beta)$. This step is where our novelty lies: we use Jordan's lemma about common invariant subspaces of two reflections and some intuition from rotation on the Bloch sphere, to reduce the quantum walk's five-dimensional invariant subspace further to a two-dimensional  space, enabling us to use the search framework.
    \item We use Theorem~\ref{thm:main2} to transform the initial state to the target state in the reduced two-dimensional space obtained  in step 2.
\end{enumerate}

We have to use in the last step Theorem~\ref{thm:main2} instead of the other three methods in Table~\ref{tab:three_exact}, because in the phase rotation operator $S_r(\beta)$ that we've obtained from step 2, the angle $\beta$ is a solution to some equations and is therefore arbitrarily given.

\section{The fixed-axis-rotation method}\label{sec:d2p}
In this section, we first reformulate the problem of searching an unstructured database based on phase oracles, and then describe our FXR method to achieve quantum exact search with one of the parameters $\alpha,\beta$ in $G(\alpha,\beta)$ being arbitrarily given. Finally, we derive the under-determined system of equations that parameters $(\beta_1,\beta_2,k)$ or $(\alpha_1,\alpha_2,k)$ need to satisfy respectively, when $\alpha$ or $\beta$ is arbitrarily given.

\subsection{Preliminaries}

In the problem of searching an unstructured database, suppose that there are $N$ elements in the database, $M$ of which are marked, so the ratio of marked elements $\lambda=M/N$ is known in advance. Suppose the marked elements' indexes are represented by the computational basis $\{\ket{t_j}\mid j=1,2,\ldots,M\}:=\mathcal{M}$, and the unmarked ones by $\{\ket{r_j}\mid j=1,2,\ldots,N-M\}$. Then the phase oracle $S_o(\alpha)$ is 
\begin{equation}
    S_o(\alpha) = e^{i\alpha} \sum_{j=1}^{M} \ket{t_j}\bra{t_j} + \sum_{j=1}^{N-M} \ket{r_j}\bra{r_j} = e^{i\alpha\, \Pi_\mathcal{M}},
\end{equation}
which rotates the phase of the marked states $\ket{t_j}$ by an angle $\alpha$ and keeps the unmarked states $\ket{r_j}$ unchanged.
We now want to design an quantum exact search algorithm which outputs one of the marked states with certainty, by querying the oracle $S_o(\alpha)$ or applying $S_r(\beta)$ as few as possible.

As usual, we starts with preparing an equal-superposition state
\begin{equation}
\ket{\psi_0} = \frac{1}{\sqrt{N}} \sum_{j=0}^{N-1} \ket{j},
\end{equation}
which can be produced by, for example, applying $H^{\otimes n}$ to $\ket{0}^{\otimes n}$ (when $2^n=N$). If we denote by $\ket{R}, \ket{T}$ the equal-superposition of unmarked and marked states respectively:
\begin{equation}
    \ket{R} := \frac{1}{\sqrt{N-M}} \sum_{j=1}^{N-M} \ket{r_j}, \quad \ket{T} := \frac{1}{\sqrt{M}} \sum_{j=1}^{M} \ket{t_j}.
\end{equation}
Then $\ket{\psi_0}$ lies in the two-dimensional subspace $\mathcal{H}_0$ spanned by orthonormal vectors $\ket{R}$ and $\ket{T}$, i.e.
\begin{equation}
    \mathcal{H}_0 = \text{span} \{ \ket{R},\ket{T} \},
\end{equation}
and can be expressed as $\ket{\psi_0} = \sqrt{1-\lambda}\ket{R}+\sqrt{\lambda}\ket{T}$. 
It is easy to see that  subspace $\mathcal{H}_0$ is invariant under the phase oracle $S_o(\alpha)$, which has the following matrix representation in its basis $\{ \ket{R}=[1,0]^T \ , \ket{T}=[0,1]^T \}$:
\begin{equation}\label{S_o}
S_o(\alpha) =
\begin{bmatrix}
1 & 0\\
0 & e^{i\alpha}
\end{bmatrix}.
\end{equation}

The phase rotation $S_r(\beta)$ is defined by
\begin{equation}
S_r(\beta)
=I-(1-e^{-i\beta})\ket{\psi_0}\bra{\psi_0} = e^{-i\beta\, \ket{\psi_0}\bra{\psi_0}}, \label{S r def}
\end{equation}
which also maps subspace $\mathcal{H}_0$ to itself, and has the following matrix representation in the same basis:
\begin{equation}\label{s_r}
S_r(\beta) =
\begin{bmatrix}
1-(1-e^{-i\beta})(1-\lambda) & -(1-e^{-i\beta})\sqrt{\lambda(1-\lambda)} \\
-(1-e^{-i\beta})\sqrt{\lambda(1-\lambda)} & 1-(1-e^{-i\beta})\lambda
\end{bmatrix}.
\end{equation}
In fact, $S_r(\beta)=I-(1-e^{-i\beta})\left( \sqrt{1-\lambda}\ket{R}+\sqrt{\lambda}\ket{T} \right) \left(\sqrt{1-\lambda}\bra{R}+\sqrt{\lambda}\bra{T}\right)$. Thus, $\bra{R}S_r(\beta)\ket{R} = 1-(1-e^{-i\beta})(1-\lambda)$,   $\bra{R}S_r(\beta)\ket{T} = \bra{T}S_r(\beta)\ket{R} = -(1-e^{-i\beta})\sqrt{\lambda(1-\lambda)}$, and 
$\bra{T}S_r(\beta)\ket{T} = 1-(1-e^{-i\beta})\lambda$, giving the matrix representation in Eq.~\eqref{s_r}.

With subspace $\mathcal{H}_0$ in mind, designing an quantum exact search algorithm now boils down to transforming the initial state $\ket{\psi_0}=[\sqrt{1-\lambda},\, \sqrt{\lambda}]^T$ to the target state $\ket{T}=[0,1]^T$, by multiplying a series of parameterized matrices $S_r(\beta)$ and $S_o(\alpha)$ for as few times as possible.

\subsection{Overview of the FXR method}
The {geometry intuition} underlying our FXR method is that one needs only two rotations of noncollinear axes to span the full $SU(2)$ space. In the case when $\alpha$ is fixed, the two rotations are $G(\alpha,\beta_1)$ and $G(\alpha,\beta_2)$. And in the case when $\beta$ is fixed, the two rotations are $G(\alpha_1,\beta)$ and $G(\alpha_2,\beta)$.
So by applying
\begin{equation}\label{eq:def_fxr}
    F_{xr,\alpha} = G(\alpha,\beta_2) G(\alpha,\beta_1), \quad \text{or} \quad F_{xr,\beta} = G(\alpha_2,\beta) G(\alpha_2,\beta)
\end{equation}
to initial state $\ket{\psi_0}$ for a number of $k$ times, where parameters $(\beta_1,\beta_2,k)$ or $(\alpha_1,\alpha_2,k)$ are to be determined later in Section~\ref{subsec:equations}, we hope to get $\ket{T}$ (possibly up to some global phase).

We have to iterate $F_{xr,*}$ for above a number of times for the following reason. Note that $F_{\text{xr},*}$ will be a fixed rotation once the pair of parameters $(\alpha_1,\alpha_2)$ or $(\beta_1,\beta_2)$ therein is fixed, so in order to transform $\ket{\psi_0}$ to $\ket{T}$ by iterating this fixed rotation, we must require the direction of the rotation to be correct: from $\ket{\psi_0}$ straight toward $\ket{T}$. 
This is actually a constraint that we impose on the rotation axis of $F_{\text{xr},*}$, and will in turn leads to a constraint relation [specifically, Eq.~\eqref{im1}] on the pair of variable parameters, as explained by step 2 in the proof of Lemma~\ref{thm:phi zero}.
However, this constraint relation also results in the range of the rotation angle $\phi$ of $F_{\text{xr},*}$ to be small, compared to the spherical distance between the initial state $\ket{\psi_0}$ and the target state $\ket{T}$. Therefore, we have to rotate for a number of times in order to reach $\ket{T}$ from $\ket{\psi_0}$.
The reason explained just now is only a geometric intuition, for more rigorous proof of the feasibility of our FXR method when $k>k_{\text{lower}}$, please refer to Section~\ref{sec:iterations k}.

\subsection{Equations that parameters need to satisfy}\label{subsec:equations}
In a nutshell, the equation that parameters $(\beta_1,\beta_2,k)$ or $(\alpha_1,\alpha_2,k)$ need to satisfy is the following:
\begin{equation}
    0 = \bra{R} F_{xr,\alpha}^k \ket{\psi_0},
    \quad \text{or}\quad 0 = \bra{R} F_{xr,\beta}^k \ket{\psi_0} \label{first_eq}.
\end{equation}
Because in the two-dimensional subspace $\mathcal{H}_0$, a state is equal to $\ket{T}$ up to some global phase if and only if it is orthogonal to $\ket{R}$.
What now stands in our way to solve this equation is the exponentiation of matrix, i.e. $F_{xr,*}^k$.
Luckily, the Bloch sphere representation of single-qubit states and operations provides the tool we need. We summarize some facts about it in Lemma~\ref{lem:Bloch}, and one may find its proof in the standard textbook \cite{nielsen_chuang_2010}.

\begin{lemma}\label{lem:Bloch}
Some facts about single-qubit states and operations in the Bloch sphere representation are as  follows.
\begin{enumerate}[(1)]
    \item Any single-qubit state $\ket{\psi}$ (unit vector in the two-dimensional Hilbert space) can be written as
    \begin{equation}
        \ket{\psi}=\cos\frac{\theta}{2}\ket{0} +e^{i\phi}\sin\frac{\theta}{2}\ket{1},
    \end{equation}
    up to some irrelevant global phase $e^{i\varphi}$ which does not affect measurement outcome, so $\ket{\psi}$ can be mapped to a unit vector $[\cos\phi \sin\theta, \sin\phi \sin\theta, \cos\theta]^T$ on the unit sphere of $\mathbb{R}^3$ (the Bloch sphere).
    \item Any single-qubit operation $U$ (two-dimensional unitary matrix) can be decomposed to 
    \begin{equation}
        U=e^{i\varphi} R_{\vec{n}}(\phi), \label{2_decomp}
    \end{equation}
    where
    \begin{align}
        R_{\vec{n}}(\phi)
        &=\cos\frac{\phi}{2} \cdot I -i\sin\frac{\phi}{2} \cdot(n_xX +n_yY +n_zZ) \label{eq:Rn_phi_pauli}\\
        &=\begin{bmatrix}
            c_\phi  -i s_\phi n_z & -is_\phi n_x -s_\phi n_y \label{R n matrix}\\
            -is_\phi n_x +s_\phi n_y & c_\phi  +is_\phi n_z
        \end{bmatrix}
    \end{align}
    represents a rotation by an angle $\phi$ about the axis $\vec{n}=(n_x,n_y,n_z)$ of the Bloch sphere. We use here (and in the sequel) the following \textbf{shorthand} for trigonometric functions:
    \begin{equation}
        s_{\phi} \equiv \sin\frac{\phi}{2}, \quad c_{\phi} \equiv \cos\frac{\phi}{2}. \tag{shorthand 1} \label{eq:trigono}
    \end{equation}
    Note that rotation $R_{\vec{n}}(\phi)$ is determined by parameters $c_\phi$ and $s_\phi \vec{n}$.
    
    \item The exponentiation of $R_{\vec{n}}(\phi)$ has the following nice expression:
    \begin{equation}
        R^k_{\vec{n}}(\phi) = R_{\vec{n}}(k\phi). \label{exponential}
    \end{equation}
    It means that rotation about a fixed axis by angle $\phi$ for $k$ times, is equal to a rotation about the same axis by angle $k\phi$.
    \item The composition of two arbitrary rotation, i.e. $R_{\vec{n}_2}(\phi_2) R_{\vec{n}_1}(\phi_1)$, is also a rotation $R_{\vec{n}}(\phi)$ with parameters given by \begin{align}
        c_\phi &= c_{\phi_1} c_{\phi_2} -(s_{\phi_1} \vec{n}_1)\cdot(s_{\phi_2} \vec{n}_2), \ \text{and} \\
        s_\phi\vec{n} &= c_{\phi_2} (s_{\phi_1} \vec{n}_1) +c_{\phi_1} (s_{\phi_2} \vec{n}_2) +(s_{\phi_2} \vec{n}_2)\times(s_{\phi_1} \vec{n}_1).
    \end{align}
\end{enumerate}
\end{lemma}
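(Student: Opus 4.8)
The plan is to verify the four items in turn; the single algebraic fact that carries almost all the weight is the Pauli-product identity $(\vec a\cdot\vec\sigma)(\vec b\cdot\vec\sigma)=(\vec a\cdot\vec b)\,I+i(\vec a\times\vec b)\cdot\vec\sigma$, where $\vec\sigma:=(X,Y,Z)$ and $\vec n\cdot\vec\sigma:=n_xX+n_yY+n_zZ$; in particular $(\vec n\cdot\vec\sigma)^2=I$ for every unit vector $\vec n$. For item~(1) I would write an arbitrary unit vector of $\mathbb{C}^2$ as $a\ket 0+b\ket 1$ with $|a|^2+|b|^2=1$, put $|a|=\cos\frac\theta2$, $|b|=\sin\frac\theta2$ for the unique $\theta\in[0,\pi]$, and factor out the phase of $a$ to reach the stated normal form with $\phi:=\arg b-\arg a$; then reading off $\ket\psi\bra\psi=\tfrac12(I+\vec r\cdot\vec\sigma)$ (equivalently computing $\langle\psi|X|\psi\rangle,\langle\psi|Y|\psi\rangle,\langle\psi|Z|\psi\rangle$) gives $\vec r=(\sin\theta\cos\phi,\sin\theta\sin\phi,\cos\theta)$, and states differing by a global phase have the same density matrix, hence the same Bloch point.

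For item~(2), since $(\vec n\cdot\vec\sigma)^2=I$ the matrix exponential series collapses to $\exp(-i\tfrac\phi2\,\vec n\cdot\vec\sigma)=\cos\tfrac\phi2\,I-i\sin\tfrac\phi2\,\vec n\cdot\vec\sigma=R_{\vec n}(\phi)$, and expanding the Paulis reproduces the $2\times2$ form in Eq.~\eqref{R n matrix}. To decompose a general $U\in U(2)$, factor out $\det U=e^{2i\varphi}$ so that $V:=e^{-i\varphi}U\in SU(2)$; any such $V$ equals $\left(\begin{smallmatrix}u & -\bar v\\ v & \bar u\end{smallmatrix}\right)$ with $|u|^2+|v|^2=1$, and matching entries against Eq.~\eqref{R n matrix} forces $c_\phi=\mathrm{Re}\,u$ (determining $\phi\in[0,2\pi]$) and, when $s_\phi\neq0$, $n_z=-\mathrm{Im}\,u/s_\phi$, $n_x=-\mathrm{Im}\,v/s_\phi$, $n_y=\mathrm{Re}\,v/s_\phi$; the constraint $|u|^2+|v|^2=1$ is exactly what makes $n_x^2+n_y^2+n_z^2=1$, while $s_\phi=0$ gives $V=\pm I$, a rotation by $0$ or $2\pi$ about any axis. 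That $R_{\vec n}(\phi)$ acts as a genuine rotation on the Bloch sphere is the only non–purely-algebraic point: one checks that $\rho\mapsto R_{\vec n}(\phi)\,\rho\,R_{\vec n}(\phi)^\dagger$ sends $\tfrac12(I+\vec r\cdot\vec\sigma)$ to $\tfrac12(I+\vec r'\cdot\vec\sigma)$ with $\vec r'$ the image of $\vec r$ under the Euclidean rotation by $\phi$ about $\vec n$, which again follows from the product identity after a short computation and exhibits the $2{:}1$ homomorphism $SU(2)\to SO(3)$. I expect this last verification to be the most tedious step, although it is entirely routine; one can also simply cite it (as the paper does), since items~(3) and~(4) only need the algebraic form of $R_{\vec n}(\phi)$, not its geometric interpretation.

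Items~(3) and~(4) are then immediate. From $R_{\vec n}(\phi)=\exp(-i\tfrac\phi2\,\vec n\cdot\vec\sigma)$ and $\exp(A)^k=\exp(kA)$ we get $R_{\vec n}(\phi)^k=\exp(-i\tfrac{k\phi}2\,\vec n\cdot\vec\sigma)=R_{\vec n}(k\phi)$, which is item~(3). For item~(4), multiply $R_{\vec n_2}(\phi_2)R_{\vec n_1}(\phi_1)=(c_{\phi_2}I-is_{\phi_2}\,\vec n_2\cdot\vec\sigma)(c_{\phi_1}I-is_{\phi_1}\,\vec n_1\cdot\vec\sigma)$, expand, and apply the product identity to $(\vec n_2\cdot\vec\sigma)(\vec n_1\cdot\vec\sigma)=(\vec n_1\cdot\vec n_2)I+i(\vec n_2\times\vec n_1)\cdot\vec\sigma$; collecting the coefficient of $I$ yields $c_\phi=c_{\phi_1}c_{\phi_2}-(s_{\phi_1}\vec n_1)\cdot(s_{\phi_2}\vec n_2)$ and the coefficient of $-i\vec\sigma$ yields $s_\phi\vec n=c_{\phi_2}(s_{\phi_1}\vec n_1)+c_{\phi_1}(s_{\phi_2}\vec n_2)+(s_{\phi_2}\vec n_2)\times(s_{\phi_1}\vec n_1)$. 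The product is manifestly of the form $cI-i\,\vec w\cdot\vec\sigma$ with $c^2+|\vec w|^2=1$ by unitarity, hence is an $R_{\vec n}(\phi)$ with the stated parameters, completing item~(4).
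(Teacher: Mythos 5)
Your proof is correct: the Pauli product identity $(\vec a\cdot\vec\sigma)(\vec b\cdot\vec\sigma)=(\vec a\cdot\vec b)I+i(\vec a\times\vec b)\cdot\vec\sigma$ does indeed carry all four items, and your entry-matching in item~(2) and the expansion in item~(4) reproduce the stated parameter formulas exactly. The paper itself offers no proof of this lemma, deferring to the standard textbook of Nielsen and Chuang, and your argument is precisely the standard one found there, so there is nothing to compare beyond noting that you have supplied what the paper only cites.
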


We first introduce some notations about the initial state $\ket{\psi_0}$. Let 
\begin{equation}\label{eq:lambda_trigono}
\sqrt{\lambda}=\sin\frac{\theta}{2}=s_\theta, \quad \sqrt{1-\lambda}=\cos\frac{\theta}{2}=c_\theta.
\end{equation}
Then the initial state can be expressed as $\ket{\psi_0} = \sqrt{1-\lambda}\ket{R}+\sqrt{\lambda}\ket{T} = c_\theta \ket{0}+s_\theta \ket{1}$. It is, by Lemma~\ref{lem:Bloch}.(1), the unit vector $[\sin\theta, 0, \cos\theta]^T$ on the Bloch sphere (with $\ket{R}$ and $\ket{T}$ being the north and south pole respectively), and lies in the $xz$-plane forming an angle of $\theta$ with the positive $z$-axis.

As can be seen from Lemma~\ref{lem:Bloch}.(3), Eq.~\eqref{exponential} provides us with the tool to deal with the exponentiation $F_{xr,*}^k$ in Eq.~\eqref{first_eq}. To use this tool, we should decompose $F_{xr,*}$ into the form of Eq.~\eqref{2_decomp}. This is done in Lemma~\ref{lem:decomp_fxr}.

\begin{lemma}\label{lem:decomp_fxr}
The effect of $F_{xr,\alpha}$ on the Bloch sphere is a rotation $R_{\vec{n}}(\phi)$ whose parameters $c_\phi, s_\phi \vec{n}$ are given by Eqs.~\eqref{eq:fxr_alpha_phi}, \eqref{eq:fxr_alpha_nx}, \eqref{eq:fxr_alpha_ny}, \eqref{eq:fxr_alpha_nz} in Appendix~\ref{app:Fxr}. For $F_{xr,\beta}$, its rotation parameters are given by Eqs.~\eqref{eq:fxr_beta_phi}, \eqref{eq:fxr_beta_nx}, \eqref{eq:fxr_beta_ny}, \eqref{eq:fxr_beta_nz}. We leave out the explicit expression here, because they are only intermediate results to deal with Eq.~\eqref{first_eq}.
\end{lemma}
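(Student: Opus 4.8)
The statement to prove is Lemma~\ref{lem:decomp_fxr}, which asserts that $F_{xr,\alpha}$ and $F_{xr,\beta}$ each act as a single rotation $R_{\vec n}(\phi)$ on the Bloch sphere of $\mathcal{H}_0$, with explicit formulas for the rotation parameters. The plan is to compute, for each of the four elementary factors $S_o(\alpha)$ and $S_r(\beta)$, its Bloch-sphere decomposition into the form $e^{i\varphi} R_{\vec n}(\phi)$ of Lemma~\ref{lem:Bloch}.(2), and then compose the four rotations using the composition formula of Lemma~\ref{lem:Bloch}.(4). The global phases $e^{i\varphi}$ commute through everything and only contribute an overall phase, which is irrelevant both for the Bloch-sphere picture and for Eq.~\eqref{first_eq}, so I would discard them at the outset and work purely with the $R_{\vec n}(\phi)$ parts.

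First I would read off the rotation parameters of the two building blocks directly from their matrix representations \eqref{S_o} and \eqref{s_r}. For $S_o(\alpha) = \mathrm{diag}(1, e^{i\alpha})$: up to the global phase $e^{i\alpha/2}$ this equals $\mathrm{diag}(e^{-i\alpha/2}, e^{i\alpha/2}) = R_{\hat z}(-\alpha)$, so it is a rotation by $-\alpha$ about the $z$-axis, i.e. $c = \cos\frac{\alpha}{2}$, $s\vec n = (0,0,-\sin\frac{\alpha}{2})$ in the shorthand \eqref{eq:trigono}. For $S_r(\beta) = I - (1-e^{-i\beta})\ket{\psi_0}\bra{\psi_0} = e^{-i\beta\ket{\psi_0}\bra{\psi_0}}$: since $\ket{\psi_0}\bra{\psi_0} = \tfrac12(I + \vec m\cdot\vec\sigma)$ where $\vec m = [\sin\theta,0,\cos\theta]^T$ is the Bloch vector of $\ket{\psi_0}$, exponentiating gives $S_r(\beta) = e^{-i\beta/2} R_{\vec m}(\beta)$ up to global phase — a rotation by $\beta$ about the fixed axis $\vec m$ in the $xz$-plane, i.e. $c = \cos\frac{\beta}{2}$, $s\vec n = \sin\frac{\beta}{2}\,(\sin\theta, 0, \cos\theta)$. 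One can cross-check these against \eqref{R n matrix} by matching entries of \eqref{S_o} and \eqref{s_r}. Then $G(\alpha,\beta) = S_r(\beta) S_o(\alpha)$ is, up to global phase, $R_{\vec m}(\beta) R_{\hat z}(-\alpha)$, and applying Lemma~\ref{lem:Bloch}.(4) once yields the rotation parameters of a single $G$-factor.

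Next I would compose the two $G$-factors. For $F_{xr,\alpha} = G(\alpha,\beta_2) G(\alpha,\beta_1)$ this means composing four elementary rotations $R_{\vec m}(\beta_2) R_{\hat z}(-\alpha) R_{\vec m}(\beta_1) R_{\hat z}(-\alpha)$; it is cleanest to group as the product of two $G$-rotations and apply Lemma~\ref{lem:Bloch}.(4) a second time, substituting the per-$G$ parameters found above. The quantities entering the composition formula are the two dot products and the cross product of the vectors $s\vec n$ for the two $G$-factors; because one axis ($\hat z$) is a coordinate axis and the other ($\vec m$) lies in the $xz$-plane, these simplify considerably, and collecting terms gives $c_\phi$ and $s_\phi\vec n = (s_\phi n_x, s_\phi n_y, s_\phi n_z)$ for $F_{xr,\alpha}$ in closed form — these are Eqs.~\eqref{eq:fxr_alpha_phi}–\eqref{eq:fxr_alpha_nz}. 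The case $F_{xr,\beta} = G(\alpha_1,\beta) G(\alpha_2,\beta)$ is handled identically, now with the two $G$-factors sharing the $\vec m$-axis rotation angle but differing in the $\hat z$-rotation angles $-\alpha_1,-\alpha_2$; this yields Eqs.~\eqref{eq:fxr_beta_phi}–\eqref{eq:fxr_beta_nz}. Finally I would note that a composition of rotations is a rotation (this is exactly the content of Lemma~\ref{lem:Bloch}.(4) iterated), so $F_{xr,*}$ is genuinely of the form $R_{\vec n}(\phi)$, completing the proof; the detailed algebra is deferred to Appendix~\ref{app:Fxr}.

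The only real obstacle here is bookkeeping: the composition formula must be applied twice, and the second application involves a cross product of two three-component vectors whose entries are themselves sums of products of $s_\alpha, c_\alpha, s_{\beta_i}, c_{\beta_i}, \sin\theta, \cos\theta$. There is no conceptual difficulty — the decomposition of $S_o$ and $S_r$ is immediate from Lemma~\ref{lem:Bloch}, and $F_{xr,*}^k = R_{\vec n}(k\phi)$ by part (3) — but care is needed to keep the axis-ordering and the sign of the $-\alpha$ rotation consistent throughout, and to track which of $\beta_1,\beta_2$ (resp.\ $\alpha_1,\alpha_2$) is the inner and which the outer factor. Since the lemma only claims existence of such $(\vec n, \phi)$ and relegates the explicit formulas to the appendix, it suffices to carry the composition through once carefully and verify a special case (e.g.\ $\alpha=\beta_1=\beta_2=\pi$, which should reproduce the ordinary Grover rotation squared) as a sanity check.
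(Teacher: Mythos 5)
Your overall strategy is the same as the paper's: decompose $S_o(\alpha)$ and $S_r(\beta)$ into Bloch-sphere rotations (this is Lemma~\ref{lem:S_o decomp}), compose them within each $G$-factor, and then compose the two $G$-rotations with Lemma~\ref{lem:Bloch}.(4) to read off $c_\phi$ and $s_\phi\vec n$. The paper performs the first composition by direct Pauli-matrix multiplication and only the second via Lemma~\ref{lem:Bloch}.(4), which is a cosmetic difference from applying the composition formula twice as you propose. Your treatment of $S_r(\beta)$, writing $\ket{\psi_0}\bra{\psi_0}$ as $\tfrac12\bigl(I+\sin\theta\,X+\cos\theta\,Z\bigr)$ and exponentiating, is correct and reproduces $S_r(\beta)=e^{-i\beta/2}R_{\psi_0}(\beta)$.

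However, your decomposition of $S_o(\alpha)$ contains a sign error that would corrupt the final formulas. With the convention of Eq.~\eqref{eq:Rn_phi_pauli}, $R_{\hat z}(\alpha)=\cos\tfrac{\alpha}{2}\,I-i\sin\tfrac{\alpha}{2}\,Z=\mathrm{diag}(e^{-i\alpha/2},e^{i\alpha/2})$, so $S_o(\alpha)=\mathrm{diag}(1,e^{i\alpha})=e^{i\alpha/2}R_{\hat z}(+\alpha)$ --- a rotation by $+\alpha$ about $+\hat z$ (i.e.\ about $\ket{R}$), exactly as stated in Lemma~\ref{lem:S_o decomp} --- and \emph{not} $R_{\hat z}(-\alpha)$ with $s\vec n=(0,0,-\sin\tfrac{\alpha}{2})$ as you write. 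Carrying your sign through the composition flips every term odd in $\sin\tfrac{\alpha}{2}$: already at the single-$G$ level you would obtain $c_\phi=c_\beta c_\alpha+s_\beta s_\alpha c_{2\theta}$ instead of Eq.~\eqref{eq:G_alpha_angle}, and the resulting parameters of $F_{xr,\alpha}$ and $F_{xr,\beta}$ would not agree with Eqs.~\eqref{eq:fxr_alpha_phi}--\eqref{eq:fxr_alpha_nz} and \eqref{eq:fxr_beta_phi}--\eqref{eq:fxr_beta_nz}, which are the entire content of the lemma. Note also that your proposed sanity check at $\alpha=\beta_1=\beta_2=\pi$ cannot detect this, since $R_{\hat z}(\pi)$ and $R_{\hat z}(-\pi)$ differ only by a global phase; a generic $\alpha$ is needed to expose the sign. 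The fix is a one-character change, but as written the derivation yields the wrong explicit expressions.
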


\begin{proof}
The proof is divided into three steps, with detailed calculation presented in Appendix~\ref{app:Fxr}:

1. We first decompose $S_o(\alpha)$ and $S_r(\beta)$ into the form of Eq.~\eqref{2_decomp}. This is shown in the following Lemma~\ref{lem:S_o decomp}. Its proof is presented in Appendix~\ref{proof:S_o}, which may also provide some idea on how to decompose a two-dimensional unitary matrix into the form of Eq.~\eqref{2_decomp}.

\begin{lemma}\label{lem:S_o decomp}
The two-dimensional unitary matrices $S_o(\alpha)$ and $S_r(\beta)$, i.e.\ Eq.~\eqref{S_o} and Eq.~\eqref{s_r}, have the following decomposition in the form of Eq.~\eqref{2_decomp}:
\begin{align}
S_o(\alpha)
&= e^{i\alpha/2}\left( c_\alpha I -i\, s_\alpha  Z \right)
= e^{i\alpha/2}\ R_{R}(\alpha), \\
S_r(\beta)
&=e^{-i\beta/2} \left( c_\beta I -i\, s_\beta(s_{2\theta} X +c_{2\theta} Z) \right)
= e^{-i\beta/2}\ R_{\psi_0}(\beta),
\end{align}
which in the Bloch sphere represents a rotation about the positive $z$-axis (i.e.\ $\ket{R}$) by an angle of $\alpha$, and a rotation about the axis $[\sin\theta, 0, \cos\theta]^T$ (i.e.\ $\ket{\psi_0}$) by an angle of $\beta$, respectively.
\end{lemma}

2. We then determine the effect of $G(\alpha,\beta)=S_r(\beta)S_o(\alpha)$ on the Bloch sphere. As global phase is irrelevant in the measurement outcome, we only need to consider $R_{\psi_0}(\beta)$ and $R_{R}(\alpha)$. By direct calculation and using some properties of Pauli matrices: $XZ=iY$ and $Z^2=I$, it is easy to obtain the parameters [shown in Eqs.~\eqref{eq:G_alpha_angle}, \eqref{eq:G_alpha_axis}] of the following rotation:
\begin{equation}
    G'(\alpha,\beta) := R_{\psi_0}(\beta) R_{R}(\alpha).
\end{equation}

3. Finally, we determine the effects of $F_{xr,\alpha}=G(\alpha,\beta_2) G(\alpha,\beta_1)$ and $F_{xr,\beta}=G(\alpha_2,\beta) G(\alpha_1,\beta)$, or equivalently, the parameters of the following rotations
\begin{align}
F'_{xr,\alpha}=G'(\alpha,\beta_2) G'(\alpha,\beta_1), \quad F'_{xr,\beta}=G'(\alpha_2,\beta) G'(\alpha_1,\beta).
\end{align}
Using Lemma~\ref{lem:Bloch}.(4) and with some calculation (see Appendix~\ref{app:Fxr}), we find that the parameters of the above two rotations are as claimed in Lemma~\ref{lem:decomp_fxr}.
\end{proof}

We now consider the equation
\begin{equation}
    0 = \bra{R} R^k_{\vec{n}}(\phi) \ket{\psi_0}, \label{exact_cond1}
\end{equation}
which is equivalent to Eq.~\eqref{first_eq}, when $\vec{n},\phi$ are substituted by the corresponding rotation parameters of $F'_{xr,\alpha}$ or $F'_{xr,\beta}$ shown in Lemma~\ref{lem:decomp_fxr}.
By Eq.\eqref{exponential} in Lemma~\ref{lem:Bloch}.(3), the exponentiation $[R_{\vec{n}}(\phi)]^k$ is equal to 
\begin{equation}
R_{\vec{n}}(k\phi)
=\begin{bmatrix}
c_{k\phi} -is_{k\phi}n_z & -is_{k\phi}n_x -s_{k\phi}n_y \\
-is_{k\phi}n_x +s_{k\phi}n_y & c_{k\phi} +is_{k\phi}n_z
\end{bmatrix}.
\end{equation}
Substituting it into Eq.~\eqref{exact_cond1}, and letting its real and imaginary parts both to be zero, we obtain the following under-determined system of equations that parameters $(\beta_1,\beta_2,k)$ or $(\alpha_1,\alpha_2,k)$ need to satisfy:
\begin{align}
0 &= \sqrt{1-\lambda} \cos\frac{k\phi}{2} - \sqrt{\lambda} \sin\frac{k\phi}{2}\ n_y, \label{real1} \tag{real}\\
0 &= \sqrt{1-\lambda}\ n_z + \sqrt{\lambda}\ n_x. \label{im1} \tag{imag}
\end{align}

To be more precise, by substituting Eq.~\eqref{eq:fxr_alpha_ny} of $s_\phi n_y$ into Eq.~\eqref{real1}; and by substituting Eqs.~\eqref{eq:fxr_alpha_nx}, \eqref{eq:fxr_alpha_nz} of $s_\phi n_x, s_\phi n_z$ into Eq.~\eqref{im1}, and after some rearrangement (dividing both sides by $c_{\beta_1} c_{\beta_2}$, and then regrouping terms by $t_{\beta_1} := s_{\beta_1}/c_{\beta_1}$ and $t_{\beta_2} := s_{\beta_2}/c_{\beta_2}$, and finally multiplying back $c_{\beta_1} c_{\beta_2}$), we obtain the explicit equations that $(\beta_1,\beta_2,k)$ need to satisfy in Theorem~\ref{thm:main} as follows:
\begin{eqnarray}
0 &=& \sin\frac{\phi}{2} \cos\frac{k\phi}{2} - 2\lambda\sin\frac{k\phi}{2} [-\sin\alpha\cos\frac{\beta_1}{2} \sin\frac{\beta_2}{2} + (1-2\lambda) (1-\cos\alpha) \sin\frac{\beta_1}{2} \sin\frac{\beta_2}{2}], \label{eq:intro_fxr_alpha_real}\\
0 &=& -\sin\frac{\beta_1}{2} \sin\frac{\beta_2}{2} \sin\alpha (1-2\lambda) +\sin\frac{\beta_1}{2} \cos\frac{\beta_2}{2} [(1-2\lambda)\cos\alpha +2\lambda]
\notag\\&&
+\cos\frac{\beta_1}{2} \sin\frac{\beta_2}{2} \cos\alpha + \cos\frac{\beta_1}{2} \cos\frac{\beta_2}{2}\sin\alpha, \label{eq:intro_fxr_alpha_im}
\end{eqnarray}
where angle $\phi$ satisfies Eq.~\eqref{eq:fxr_alpha_phi}.

Similarly, the explicit equations that $(\alpha_1,\alpha_2,k)$ need to satisfy in Theorem~\ref{thm:main2} are the following:
\begin{eqnarray}
0 &=& \sin\frac{\phi}{2} \cos\frac{k\phi}{2} - 2\lambda\sin\frac{k\phi}{2} [-\sin\frac{\alpha_1}{2} \cos\frac{\alpha_2}{2} \sin\beta +(1-2\lambda) \sin\frac{\alpha_1}{2} \sin\frac{\alpha_2}{2} (1-\cos\beta)], \label{eq:intro_fxr_beta_real}\\
0 &=& -\sin\frac{\alpha_1}{2} \sin\frac{\alpha_2}{2} (1-2\lambda)\sin\beta +\sin\frac{\alpha_1}{2} \cos\frac{\alpha_2}{2} \cos\beta
\notag\\&&
-\cos\frac{\alpha_1}{2} \sin\frac{\alpha_2}{2} (2\lambda+ (1-2\lambda)\cos\beta) +\cos\frac{\alpha_1}{2} \cos\frac{\alpha_2}{2} \sin\beta, \label{eq:intro_fxr_beta_im}
\end{eqnarray}
where angle $\phi$ satisfies Eq.~\eqref{eq:fxr_beta_phi}.

\section{Existence of parameters when $k>k_{\text{lower}}$} \label{sec:iterations k}
In this section we prove the sufficient condition for our FXR method to succeed: whenever the number of iterations $k$ is fixed to be greater than $k_{\mathrm{lower}}$ [see Eq.~\eqref{k_lower_intro} or \eqref{k_lower2_intro}], there exits a solution $(\beta_1,\beta_2)$ or $(\alpha_1,\alpha_2)$ to the system of equations \eqref{real1}, \eqref{im1}. More explicitly, Eqs.~\eqref{eq:intro_fxr_alpha_real}, \eqref{eq:intro_fxr_alpha_im} for $(\beta_1,\beta_2)$; and Eqs.~\eqref{eq:intro_fxr_beta_real}, \eqref{eq:intro_fxr_beta_im} for $(\alpha_1,\alpha_2)$. Their value can be determined numerically by using, for example, MATLAB.

We first prove for the case when $\alpha$ is fixed (Theorem~\ref{thm:main}) from Section~\ref{subsec:continuous} to \ref{subsec:lower_k}. The proof for the case when $\beta$ is fixed (Theorem~\ref{thm:main2}) is almost the same, but we point out some differences in Section~\ref{subsec:differ}.

The outline of our proof for the existence of solution $(\beta_1,\beta_2)$ to Eqs.~\eqref{real1}, \eqref{im1} whenever $k$ is greater than $k_{\text{lower}}$ [see Eq.~\eqref{k_lower_intro}] is as follows, one may find it helpful before delving into the details.
\begin{enumerate}
    \item In Section~\ref{subsec:continuous} we first deal with Eq.~\eqref{im1}, which does not contain $k$. Thus solving it will give us a family of continuous functions $\beta_2=f(\beta_1)+2l\pi$ for $l\in\mathbb{Z}$, which allows us to regard the RHS of Eq.~\eqref{real1} as a univariate function of $\beta_1$.
    \item Let $R_{\vec{n}}(\phi) := F'_{xr,\alpha}$, we prove in Lemma~\ref{thm:phi zero} that there exists a pair of special parameters $(\beta_1',\beta_2')$ satisfying Eq.~\eqref{im1} (so we can denote the branch of continuous functions $f+2l\pi$ that pass through $(\beta_1',\beta_2')$ by $\hat{f}$ ) such that $\phi=0$. And in Lemma~\ref{lem:another_pair} we find another pair of parameters $(\beta_1'',\beta_2'')$ on $\hat{f}$ such that $\phi/2 \in \{\phi_0, \pi-\phi_0\}$, where $\phi_0$ is shown by Eq.~\eqref{phi_0}.
    \item Then in Section~\ref{subsec:lower_k}, by substituting $\beta_2=\hat{f}(\beta_1)$ into Eq.~\eqref{eq:fxr_alpha_phi} about $c_\phi$, we get another continuous function regarding $\phi$: $\phi/2=g(\beta_1)$, and from step 2 we know that $g(\beta_1')=0$ and $g(\beta_1'')\geq \phi_0$. Thus by the continuity of $g$, when $k$ is big enough that $\pi/k \leq \phi_0$ [this inequality gives us the lower bound on $k$ shown in Eq.~\eqref{k_lower_intro}], there exists $\beta_1'''$ such that $\phi/2=g(\beta_1''')=\pi/k$.
    \item Finally, by substituting the above two continuous functions $\beta_2=\hat f(\beta_1)$ and $\frac{\phi}{2}=g(\beta_1)$ into the RHS of Eq.~\eqref{real1}, we obtain a continuous function $F(\beta_1)$. Since $\frac{\phi}{2}=g(\beta_1')=0$ and $\frac{\phi}{2}=g(\beta_1''')=\pi/k$ from step~3, we know that $F(\beta_1')>0$ and $F(\beta_1''')<0$. Therefore by the intermediate value theorem of continuous function on closed interval, we conclude that there exists $\hat{\beta}_1$ such that $F(\hat{\beta}_1)=0$, which completes the proof.
\end{enumerate}

\subsection{ Eq.~\eqref{im1} leads to continuous function $\beta_2=f(\beta_1)$ }\label{subsec:continuous}
Consider Eq.~\eqref{im1}, or equivalently Eq.~\eqref{eq:intro_fxr_alpha_im},
we find that $t_2:=\tan\frac{\beta_2}{2}$ is a function of $t_1:=\tan\frac{\beta_1}{2}$ as follows.
\begin{equation}
    t_2 = -\frac{t_1\big[ (1-2\lambda) c_{2\alpha} + 2\lambda \big] + s_{2\alpha}}{c_{2\alpha} - t_1 s_{2\alpha} (1-2\lambda)} \tag{2f1} \label{2f1}
\end{equation}

Note that when $\beta_1$ satisfies $t_1=\frac{c_{2\alpha}}{s_{2\alpha}(1-2\lambda)}$, the denominator of the RHS of Eq.~\eqref{2f1} will become zero. We call this $\beta_1$ the potential `discontinuity point'.
If we define $\beta_2=f(\beta_1)$ by taking `$2\arctan$' of the RHS of Eq.~\eqref{2f1}, then because the range of $2\arctan$ is $(-\pi,\pi)$, $f$ will jump from $-\pi$ to $\pi$, or vice-versa, at this `discontinuity point'.
This phenomenon is shown in Fig. \ref{fig:continum_f} by the discontinuous black curve between the two horizontal red lines, where the red lines represent the range of $2\arctan$.
However, one can always translate the continuous segment on one side of the `discontinuity point' up or down by $2\pi$, and concatenate it with the other side, to form a new continuous function, which we still denote by $\beta_2=f(\beta_1)$.
Thus, we say that function $\beta_2=f(\beta_1)$ is {\it inherently continuous}. A more rigorous proof considering different cases of potential `discontinuity point' can be found in Lemma~\ref{lem:continuum_f}.

\begin{figure}[ht]
\centering
\includegraphics[width=0.5\textwidth]{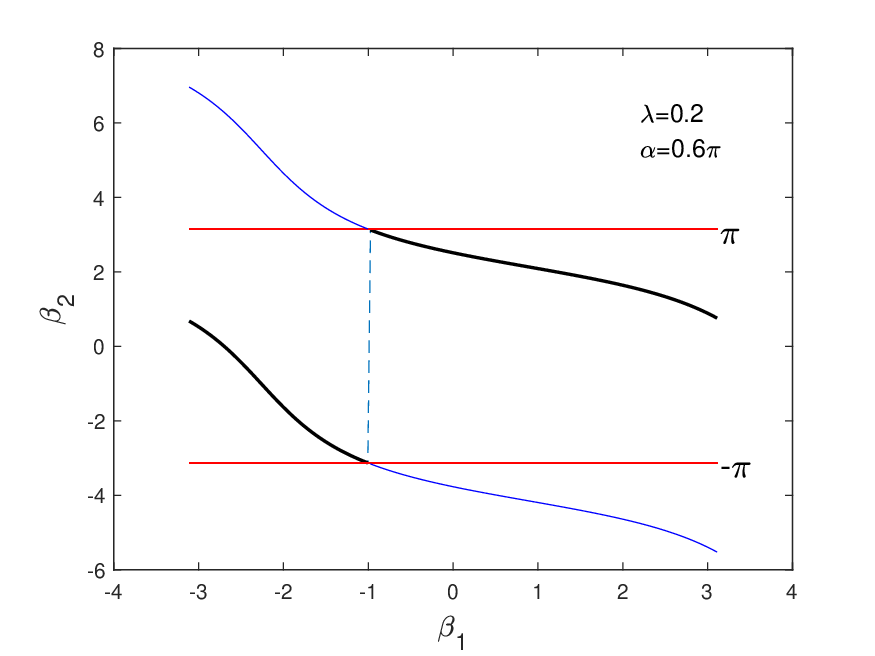}
\caption{\label{fig:continum_f} Image of function $\beta_2=f(\beta_1)$ when $\lambda=0.2, \alpha=0.6\pi$. Black curve between the two horizontal red line represents $f$ obtained by taking $\arctan$ of Eq.~\eqref{2f1}, blue curve outside the two horizontal line represents translation of continuous segment of $f$ to form new continuous $f$. }
\end{figure}

Because the period of $\tan x$ is $\pi$, we have actually obtained a family of continuous functions $\beta_2=f(\beta_1)+2l\pi,\ \beta_1\in[-\pi,\pi]$ which satisfy Eq.~\eqref{2f1}, where $l\in\mathbb{Z}$.
In other words, the solutions $(\beta_1,\beta_2)$ to Eq.~\eqref{im1} constitutes a family of parallel continuous function (curve) $\beta_2=f(\beta_1)+2l\pi,\, \forall l\in\mathbb{Z}$ in the two-dimensional plane of $(\beta_1,\beta_2)$.

\subsection{ A pair of special parameters $(\beta_1',\beta_2')$ }\label{subsec:special beta 1}
In this subsection, we present our key observation (Lemma~\ref{thm:phi zero}) that there exists a pair of parameters $(\beta_1',\beta_2')$ satisfying Eq.~\eqref{im1} such that the rotation angle $\phi$ of $R_{\vec{n}}(\phi)$ is zero! This is done by reasoning with 3D geometry rather than solving Eqs.\eqref{eq:intro_fxr_alpha_im}, \eqref{eq:fxr_alpha_phi} directly (which is likely to be intractable).

\begin{lemma}\label{thm:phi zero}
There exists a pair of parameters $(\beta_1',\beta_2')$ satisfying Eq.~\eqref{im1} such that $R_{\vec{n}}(\phi) := F'_{xr,\alpha}$ is the identity transformation $I$, which implies $\phi=0$.
\end{lemma}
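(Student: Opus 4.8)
The plan is to exhibit explicit special values $(\beta_1',\beta_2')$ making the product of four Bloch-sphere rotations collapse to the identity, and the cleanest route is to make the two $G'$-factors of $F'_{xr,\alpha}$ individually into rotations whose composition is trivial. Recall from step~2 of the proof of Lemma~\ref{lem:decomp_fxr} that $G'(\alpha,\beta)=R_{\psi_0}(\beta)R_{R}(\alpha)$, a product of a fixed rotation by $\alpha$ about the $z$-axis $\ket{R}$ and a rotation by $\beta$ about the axis $\ket{\psi_0}=[\sin\theta,0,\cos\theta]^T$; both axes lie in the $xz$-plane. I would first compute, using Lemma~\ref{lem:Bloch}.(4), the rotation angle $\phi_G$ and axis $\vec{m}$ of $G'(\alpha,\beta)$ as functions of $\beta$ (with $\alpha,\theta$ fixed); since the two constituent axes are coplanar, the resulting axis $\vec{m}(\beta)$ sweeps out a one-parameter family, and its angle $\phi_G(\beta)$ varies continuously. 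The key structural fact I would use is that $F'_{xr,\alpha}=G'(\alpha,\beta_2)G'(\alpha,\beta_1)$ equals $I$ exactly when $G'(\alpha,\beta_2)=[G'(\alpha,\beta_1)]^{-1}$, i.e.\ when $\phi_G(\beta_2)=-\phi_G(\beta_1)$ (equivalently $2\pi-\phi_G(\beta_1)$) and $\vec{m}(\beta_2)=\vec{m}(\beta_1)$.

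The cleanest choice is the degenerate one: pick $\beta_1'$ so that $G'(\alpha,\beta_1')$ is itself the identity (or a rotation by $2\pi$), and then any $\beta_2'$ with the same property works — in particular $\beta_2'=\beta_1'$, provided that pair also satisfies Eq.~\eqref{im1}. So the first thing I would check is whether there is a $\beta$ with $G'(\alpha,\beta)=I$. Geometrically, $R_{\psi_0}(\beta)R_R(\alpha)=I$ forces $R_{\psi_0}(\beta)=R_R(-\alpha)$, which is impossible for generic $\theta$ since the axes differ; so instead I expect the relevant solution to be the non-degenerate one where $\beta_1'\neq\beta_2'$. Concretely: let $\beta$ range over $(0,2\pi)$; by continuity and the intermediate value theorem the axis direction $\vec{m}(\beta)$ of $G'(\alpha,\beta)$ traces a continuous arc, and I would argue there exist two distinct values $\beta_1',\beta_2'$ with $\vec{m}(\beta_1')=\vec{m}(\beta_2')$ (a single axis hit at two parameter values — this is generic for a curve on the sphere that is not injective, e.g.\ because $\vec{m}(\beta)$ and $\vec{m}(2\pi-\beta)$ are related by the symmetry $\beta\mapsto 2\pi-\beta$ combined with $\phi_G\mapsto-\phi_G$) and simultaneously $\phi_G(\beta_1')+\phi_G(\beta_2')\in 2\pi\mathbb{Z}$. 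The symmetry $G'(\alpha,2\pi-\beta)$ versus $[G'(\alpha,\beta)]^{-1}$ is the lever here: if one can show $G'(\alpha,2\pi-\beta)=[G'(\alpha,\beta)]^{\pm 1}$ up to a sign bookkeeping, then taking $\beta_2'=2\pi-\beta_1'$ for a suitable $\beta_1'$ immediately gives $F'_{xr,\alpha}=I$. I would then verify that this pair $(\beta_1',2\pi-\beta_1')$ indeed satisfies Eq.~\eqref{im1}, or adjust $\beta_1'$ within the continuous family $\beta_2=f(\beta_1)+2l\pi$ of Section~\ref{subsec:continuous} until both conditions hold — using that Eq.~\eqref{im1} cuts out a continuum, so there is a one-parameter family to play with, and the identity condition imposes only finitely many extra constraints that can be met along it.

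The main obstacle I anticipate is establishing the axis-coincidence $\vec{m}(\beta_1')=\vec{m}(\beta_2')$ rigorously rather than just plausibly: the axis of $G'(\alpha,\beta)$ is $s_\phi\vec{n}=c_{\alpha}(s_\beta\vec{m}_{\psi_0})+c_\beta(s_\alpha\vec{m}_R)+(s_\beta\vec{m}_{\psi_0})\times(s_\alpha\vec{m}_R)$ in the notation of Lemma~\ref{lem:Bloch}.(4), and its $y$-component comes purely from the cross product $\propto s_\alpha s_\beta\sin\theta$, which never vanishes for $\beta\in(0,2\pi)$, $\alpha\in(0,2\pi)$; so the axis is genuinely out of the $xz$-plane and one must track a 3D curve. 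I would handle this by exploiting the explicit symmetry under $\beta\mapsto 2\pi-\beta$: this flips $s_\beta\mapsto s_\beta$, $c_\beta\mapsto -c_\beta$ (with the shorthand $s_\beta=\sin\frac{\beta}{2},c_\beta=\cos\frac{\beta}{2}$), hence flips the sign of the $c_\beta$-term while preserving the other two — which should make $G'(\alpha,2\pi-\beta)$ equal to the rotation by the \emph{negated} angle about a \emph{reflected} axis, and one checks directly from the formulas of Appendix~\ref{app:Fxr} that composing it with $G'(\alpha,\beta)$ collapses everything. In short, the heart of the argument is a short symmetry computation on the four-rotation product; the geometry (two coplanar generating axes, continuity of the induced axis) is what guarantees such a symmetric pair exists on the curve $\beta_2=\hat f(\beta_1)$, and the rest is routine verification against Eq.~\eqref{im1} and Eq.~\eqref{eq:fxr_alpha_phi}.
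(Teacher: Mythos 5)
There is a genuine gap. Your reduction of $F'_{xr,\alpha}=I$ to ``$G'(\alpha,\beta_2)=[G'(\alpha,\beta_1)]^{-1}$'' is correct, but the two mechanisms you propose for producing such a pair both fail. First, the symmetry lever: with the shorthand $s_\beta=\sin\frac{\beta}{2},c_\beta=\cos\frac{\beta}{2}$ one gets $G'(\alpha,2\pi-\beta)=-R_{\psi_0}(-\beta)R_R(\alpha)$, whereas $[G'(\alpha,\beta)]^{-1}=R_R(-\alpha)R_{\psi_0}(-\beta)$; since the two factors do not commute and $R_R(\alpha)\neq R_R(-\alpha)$ for $\alpha\in(0,2\pi)$, these are not equal, so taking $\beta_2'=2\pi-\beta_1'$ does not collapse the product. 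Second, the axis-coincidence route: from Eq.~\eqref{eq:G_alpha_axis} the (unnormalized) axis of $G'(\alpha,\beta)$ is $\left(s_\beta c_\alpha\sin\theta,\,-s_\beta s_\alpha\sin\theta,\,c_\beta s_\alpha+s_\beta c_\alpha\cos\theta\right)$; its $xy$-projection is the fixed direction $(c_\alpha,-s_\alpha)$ scaled by $s_\beta\sin\theta$, and comparing the $z$-components forces $(c_{\beta_1},s_{\beta_1})=\pm(c_{\beta_2},s_{\beta_2})$, i.e.\ $\beta_1=\beta_2 \bmod 2\pi$. So the axis curve is \emph{injective}, two distinct parameter values never share an axis, and your expectation that ``the relevant solution is the non-degenerate one where $\beta_1'\neq\beta_2'$'' is exactly backwards: mutually inverse nontrivial rotations must share an axis, so the only possible solution is the diagonal one $\beta_1'=\beta_2'$ with $G'(\alpha,\beta_1')$ a rotation by $\pi$ (an option you list and then discard). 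Your final fallback --- sliding along the one-parameter family cut out by Eq.~\eqref{im1} until ``finitely many extra constraints'' are met --- is not an argument; imposing even two further equations on a one-parameter family is generically unsolvable, and establishing solvability here is precisely the content of the lemma.

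The paper's proof supplies the two ideas your proposal is missing, and neither involves making the factors individually inverse. (i) It constructs, via two intersecting circles on the Bloch sphere (the $R_R(\alpha)$-orbit of $\ket{\psi_0}$ and the $R_{\psi_0}$-orbit of its image $F$), an angle $\beta_1'$ such that $F'_{xr,\alpha}$ fixes $\ket{\psi_0}$ for \emph{every} $\beta_2$ --- a single equation in the single unknown $\beta_1$, solved geometrically. (ii) It shows that Eq.~\eqref{im1} is exactly the statement that the rotation axis of $F'_{xr,\alpha}$ lies on the perpendicular bisector plane of $\ket{\psi_0}$ and $\ket{T}$ on the Bloch sphere, hence cannot be collinear with $\ket{\psi_0}$. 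Since a nontrivial rotation fixes only its axis, choosing $\beta_2'=f(\beta_1')$ on the Eq.~\eqref{im1} curve forces $F'_{xr,\alpha}=\pm I$, and a $2\pi$ shift of $\beta_2'$ fixes the sign. You would need to import both steps (or prove directly that the involution point $\tan\frac{\beta^*}{2}=\cot\frac{\alpha}{2}/(1-2\lambda)$, taken with $\beta_1'=\beta_2'=\beta^*$, satisfies Eq.~\eqref{im1}) for your outline to become a proof.
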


\begin{proof}
The proof is divided into four steps:

1. We first show that there exists a specific angle $\beta_1'$ such that $R_{\vec{n}}(\phi)$ keeps $\ket{\psi_0}$ still as follows.

From Lemma~\ref{lem:S_o decomp} we know that $G'(\alpha,\beta) = R_{\psi_0}(\alpha) R_{R}(\alpha)$ represents: a rotation about $\ket{R}$ by angle $\alpha$, followed by a rotation about $\ket{\psi_0}$ by angle $\beta$ on the Bloch sphere, which are illustrated by the two intersecting small blue circle shown in Fig. \ref{fig:bloch_sphere}.(b).  

\begin{figure*}[ht]
\centering
\subfloat{\includegraphics[width=0.43\textwidth]{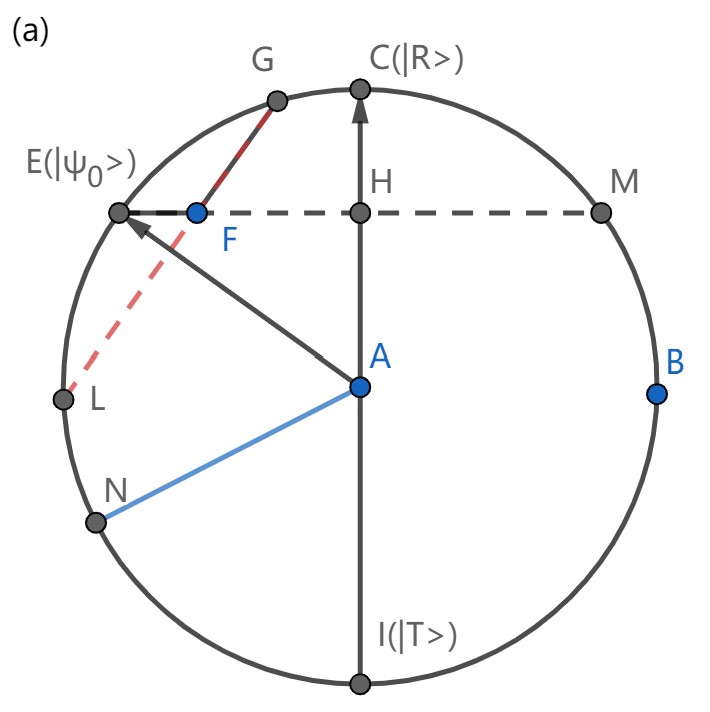}}
\subfloat{\includegraphics[width=0.5\textwidth]{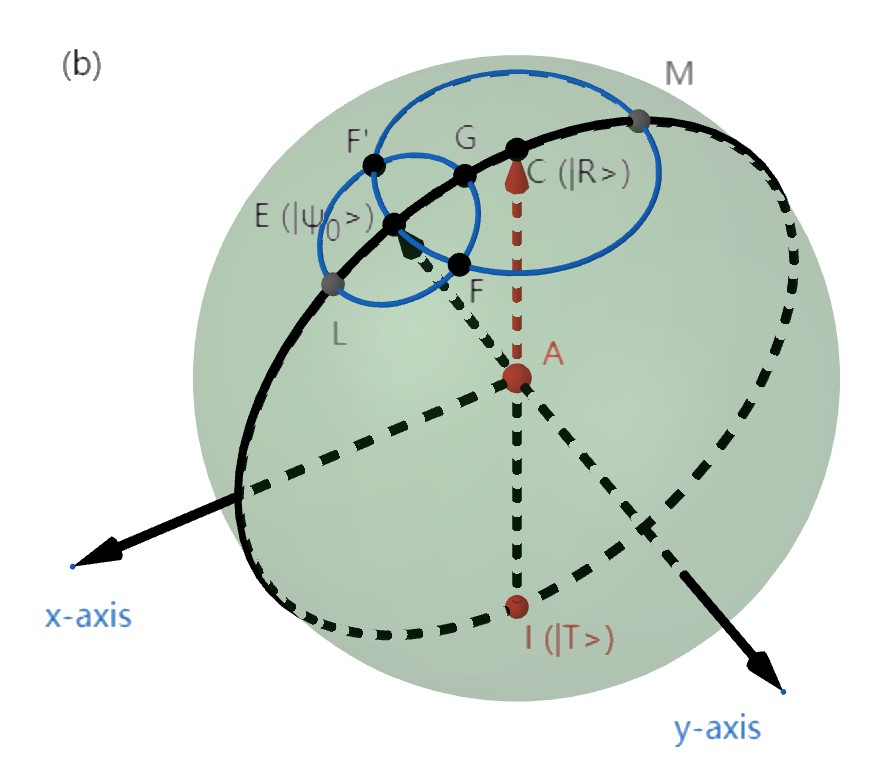}}
\caption{\label{fig:bloch_sphere} Bloch sphere interpretation on the effect of $R_{\psi_0}(\beta)$ and $R_{R}(\alpha)$ on the initial state $E (\ket{\psi_0})$. (a) Perspective of (b) from the positive $y$-axis. $\mathop{AN}\limits^{\longrightarrow}$ bisects $\angle EAI$. (b) Bigger blue circle $EFMF'E$ represents the trajectory of $E$ under rotation $R_{R}(\alpha)$. Smaller blue circle $FGF'LF$ represents the trajectory of $F$ under rotation $R_{\psi_0}(\beta)$.  }
\end{figure*}

Changing perspective to look at the $xz$-plane from the positive $y$-axis, one gets the view shown by Fig. \ref{fig:bloch_sphere}.(a).
The horizontal dashed black line $\overline{EM}$ represents the circular trajectory of $E (\ket{\psi_0})$ under rotation $R_{R}(\alpha)$. The oblique dashed red line $\overline{GL}$ (which is perpendicular to $\mathop{AE}\limits^{\longrightarrow}$) represents the circular trajectory of $F$ under rotation $R_{\psi_0}(\beta_1)$.
We now consider the trajectory of $E$ under operation $R_{\vec{n}}(\phi)=G'(\alpha,\beta_2)G'(\alpha,\beta_1)$. 
Firstly, suppose $R_{R}(\alpha)$ rotates $E$ to $F$ on $\overline{EM}$. 
Then, applying $R_{\psi_0}(\beta_1)$ rotates $F$ to any point on the oblique red dashed line $\overline{GL}$. So there exists an angle $\beta_1=\beta_1'$ such that $R_{\psi_0}(\beta_1')$ rotates $F$ to $F'$, which is the the other intersecting point of `circle' $\overline{EM}$ and `circle' $\overline{GL}$ (shown in Fig. \ref{fig:bloch_sphere}.(b)). In the view of Fig. \ref{fig:bloch_sphere}.(a), $F'$ coincides with $F$.
Finally, $R_{R}(\alpha)$ rotates $F'$ back to $E$, which is on the axis of rotation $R_{\psi_0}(\beta_2')$, so applying $R_{\psi_0}(\beta_2')$ keeps $E$ still. In the view of Fig. \ref{fig:bloch_sphere}.(b), the overall effect of $R_{\vec{n}}(\phi)$ on $E$ is the trajectory $\overset{\frown}{EF}-\overset{\frown}{FGF'}-\overset{\frown}{F'E}-E$.

2. We now show that Eq.~\eqref{im1} holds if and only if the rotation axis $\vec{n}$ of $R_{\vec{n}}(\phi)$ satisfies $\braket{T|\vec{n}}=\braket{\psi_0|\vec{n}} $, which is equivalent to $\vec{n}$ being on the perpendicular bisectionplane of $\overline{EI}$.

Note that on the Bloch sphere, $\ket{T}=[0,0,-1] $ and $\ket{\psi_0} = [\sin\theta,0,\cos\theta]=[ 2\sqrt{\lambda(1-\lambda)},0,(1-2\lambda) ]$, so $\braket{T|\vec{n}}=\braket{\psi_0|\vec{n}}$ is equivalent to 
$-n_z = 2\sqrt{\lambda(1-\lambda)} n_x + (1-2\lambda) n_z$, which is equivalent to $0 = \sqrt{1-\lambda}\ n_z + \sqrt{\lambda}\ n_x $, and it's exactly Eq.~\eqref{im1}.

Next we would like to show that the rotation axis $\vec{n}$ of $R_{\vec{n}}(\phi)$ is on the perpendicular bisectionplane of $\overline{EI}$, if and only if $ \braket{T|\vec{n}}=\braket{\psi_0|\vec{n}} $. 
Suppose $\vec{n}$ is represented by $\mathop{AN}\limits^{\longrightarrow}$. Then the former condition is equivalent to $|NE|=|NI|$, because the bisectionplane is all the points in the $\mathbb{R}^3$ space that are equidistant from $E$ and $I$.
Suppose $|NE|=|NI|$, then we have $\triangle ANE \cong \triangle ANI $ (note that these two triangle might not be coplanar, because $\vec{n}$ is not necessarily in the xz-plane), because all their sides are equal to $1$. Thus the projections of $AE,AI$ to $AN$ are of the same length, which implies $\braket{T|\vec{n}}=\braket{\psi_0|\vec{n}} $, finishing the `only if' part.
Conversely, if the projections are of the same length, then it can be seen that the two right triangles formed by projection are congruent (note that $|AE|=|AI|=1$), so $\angle EAN=\angle IAN$. Thus we have $\triangle EAN \cong \triangle IAN $. Therefore, $|NE|=|NI|$ which completes the `if' part.

3. Take any branch of the continuous function family $\beta_2=f(\beta_1)+2l\pi$, and let $\beta_2'=f(\beta_1')$. Then the pair of parameters $\beta_1',\beta_2'$ satisfy Eq.~\eqref{im1}, and thus by step 2 we know the rotation axis $\vec{n}\, (\mathop{AN}\limits^{\longrightarrow})$ of $R_{\vec{n}}(\phi)$ is not collinear with $\ket{\psi_0} (E)$. Because $R_{\vec{n}}(\phi)$ keeps $\ket{\psi_0}$ still by step 1, and that any nontrivial three-dimensional rotation can only keep its axis still, we conclude that the parameters $\beta_1',\beta_2'$ makes $R_{\vec{n}}(\phi)$ become the trivial transformation $I$.

4. From  Eq.~\eqref{eq:fxr_alpha_phi}, one would notice that when one of the $\beta_i$ is added by $2\pi$, the sign of $\cos\phi$ would change. So there is a minor problem in step 3, that is, $R_{\vec{n}}(\phi)$ might be $\pm I$. But adding $2\pi$ to $\beta_i$ does not change the geometry effect of rotation in the Bloch sphere. Thus the analysis in step 3 still works, except that $R_{\vec{n}}(\phi)$ might be $-I$. But we can always let $\beta_2'\leftarrow \beta_2'+2k\pi$, where $k\in\{0,1\}$ such that $\cos\phi=+1$. Then this new pair of $(\beta_1',\beta_2')$ makes $R_{\vec{n}}(\phi) = I$ and the angle $\phi=0$.
\end{proof}

Because $(\beta_1',\beta_2')$ in Lemma~\ref{thm:phi zero} satisfy Eq.~\eqref{im1}, we can always find a branch $\hat{f}$ of $\beta_2=f(\beta_1)+2l\pi$ such that $\beta_2'=\hat f(\beta_1')$. This continuous function $\hat{f}$ will be frequently used in the rest of this section.

\subsection{Another pair of special parameters $(\beta_1'',\beta_2'')$ }\label{subsec:theta2}

\begin{lemma}\label{lem:another_pair}
Let $\beta_1''= -\alpha$ and $\beta_2''=\hat f(\beta_1'')$, then the rotation angle $\phi$ of $R_{\vec{n}}(\phi)$ satisfy $\cos\frac{\phi}{2} \in\{\cos4\phi',-\cos4\phi'\}$.
\end{lemma}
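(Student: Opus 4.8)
\textbf{Proof proposal for Lemma~\ref{lem:another_pair}.}

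The plan is to exploit the geometric picture already set up in the proof of Lemma~\ref{thm:phi zero} together with the composition rule Lemma~\ref{lem:Bloch}.(4). Recall that on the Bloch sphere $G'(\alpha,\beta)=R_{\psi_0}(\beta)R_{R}(\alpha)$, so $R_{\vec n}(\phi)=F'_{xr,\alpha}=R_{\psi_0}(\beta_2)R_{R}(\alpha)R_{\psi_0}(\beta_1)R_{R}(\alpha)$. The special choice $\beta_1''=-\alpha$ makes the middle factor $R_{R}(\alpha)R_{\psi_0}(-\alpha)$, and I want to identify this composition of two equal-angle rotations about the noncollinear axes $\ket{R}$ and $\ket{\psi_0}$ with a single rotation. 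Using Lemma~\ref{lem:Bloch}.(4) with $\phi_1=-\alpha$, $\vec n_1=\hat\psi_0$, $\phi_2=\alpha$, $\vec n_2=\hat R$, and noting $s_{-\alpha}=-s_\alpha$, $c_{-\alpha}=c_\alpha$, one gets $c = c_\alpha^2 - s_\alpha^2(\hat\psi_0\cdot\hat R) \cdot(-1)$-type cancellations; the cross-product term $s_\alpha^2\,\hat R\times\hat\psi_0$ survives. Since $\hat\psi_0\cdot\hat R=\cos\theta=1-2\lambda$ and $\hat R\times\hat\psi_0$ points along $+\hat y$ with magnitude $\sin\theta$, this middle composition is a clean rotation $R_{\vec m}(\mu)$ with $c_\mu = c_\alpha^2 + s_\alpha^2\cos\theta$ and an explicitly computable axis $\vec m$ and half-angle sine. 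This is the one genuinely new computation; everything else is assembly.

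Next I would feed this back in: $R_{\vec n}(\phi) = R_{\psi_0}(\beta_2'')\,R_{\vec m}(\mu)\,$ is \emph{not} yet the whole story because there is still an outer $R_{R}(\alpha)$; more carefully, $R_{\vec n}(\phi)=R_{R}(\alpha)\,R_{\psi_0}(\beta_2'')\,R_{R}(\alpha)\,R_{\psi_0}(-\alpha)$, so I should group as $\big(R_{R}(\alpha)R_{\psi_0}(\beta_2'')\big)\cdot\big(R_{R}(\alpha)R_{\psi_0}(-\alpha)\big)$ and use that $\beta_2''=\hat f(\beta_1'')$ forces the pair $(\beta_1'',\beta_2'')$ to satisfy Eq.~\eqref{im1}, hence by step~2 of Lemma~\ref{thm:phi zero} the axis $\vec n$ lies on the perpendicular bisector plane of $\overline{EI}$, i.e. $\braket{T|\vec n}=\braket{\psi_0|\vec n}$. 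The right way to pin down $c_\phi$ is to track where $E=\ket{\psi_0}$ goes: the factor $R_{R}(\alpha)R_{\psi_0}(-\alpha)$ acts on $E$ (which is fixed by $R_{\psi_0}(-\alpha)$) simply as $R_{R}(\alpha)$, sending $E$ to the point $F$ on circle $\overline{EM}$; then $R_{R}(\alpha)R_{\psi_0}(\beta_2'')$ acts on $F$. The rotation angle $\phi$ of the total map is then recovered from the spherical geometry: $\phi$ equals (twice) the angle subtended at the axis $\vec n$ between $E$ and its image, and because the image of $E$ is a point whose position is governed by $\mu$ and the fixed geometry, $\cos\frac\phi2$ comes out as $\pm$ a function of $\mu$ only. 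The claim $\cos\frac\phi2\in\{\cos4\phi',-\cos4\phi'\}$ should then reduce to identifying $\phi'$, presumably $\phi'$ is half the half-angle appearing in $R_{\vec m}(\mu)$, i.e. $s_{\phi'}=\sin\frac{\mu}{?}$ related to $\sin\alpha\,s_\theta$; I expect $\phi'$ is exactly the quantity $\arcsin(s_\alpha s_\theta)$-type angle so that $c_\mu$ rewrites as $\cos(2\cdot2\phi')$ and then $\cos\frac\phi2=\pm\cos4\phi'$ follows from a double-angle collapse.

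Concretely the steps in order are: (1) compute $R_{R}(\alpha)R_{\psi_0}(-\alpha)=R_{\vec m}(\mu)$ via Lemma~\ref{lem:Bloch}.(4), getting $c_\mu$ and $s_\mu\vec m$ in closed form; (2) observe $s_{\mu}\vec m$ has zero $\hat y$-component or a controlled one, so $\vec m$ sits in a known plane, and introduce $\phi'$ through $s_\mu = $ (something like) $2 s_\alpha s_\theta$ up to the algebra, matching the half-angle bookkeeping so that $c_\mu=1-2s_\mu^2/\dots = \cos 4\phi'$ (this is where the ``$4$'' is born — a half-angle of a half-angle); (3) pull in the outer $R_{R}(\alpha)$ together with the constraint Eq.~\eqref{im1} on $\beta_2''$, and use that $R_{\vec n}(\phi)$ fixes neither $E$ alone but its axis is the bisector, to show the total half-angle cosine is $\pm c_\mu$; (4) conclude $\cos\frac\phi2\in\{\cos4\phi',-\cos4\phi'\}$, the sign ambiguity being exactly the harmless $\beta_i\mapsto\beta_i+2\pi$ freedom already flagged in step~4 of Lemma~\ref{thm:phi zero}. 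The main obstacle I anticipate is step~(3): controlling how the outer $R_{R}(\alpha)$ interacts with the $\beta_2''$ that is only implicitly defined through $\hat f$ — I will want to avoid plugging $\hat f$ in explicitly and instead argue purely from the bisector-plane characterization of $\vec n$ and a trace computation $2c_\phi^2-1=\mathrm{Re}\,\mathrm{tr}\,R_{\vec n}(\phi)/\text{(stuff)}$, or equivalently from $\cos\frac\phi2=\pm\langle\text{fixed vectors}\rangle$, so that the $\beta_2''$-dependence cancels and only $\mu$ (hence $\phi'$) remains. If that cancellation is cleaner in the algebraic Eqs.~\eqref{eq:intro_fxr_alpha_im}, \eqref{eq:fxr_alpha_phi} than in the geometry, I would instead substitute $\beta_1''=-\alpha$ directly into those and simplify, which is mechanical but guaranteed to terminate.
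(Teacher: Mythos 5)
Your proposal circles the right objects but misses the one fact that makes the lemma collapse to a short computation, and without it your step~(3) does not go through. The paper's proof rests on the observation (borrowed from Long's scheme) that the single rotation $G'(\alpha,-\alpha)=R_{\psi_0}(-\alpha)R_{R}(\alpha)$ already has its axis satisfying Eq.~\eqref{im1}: by Eq.~\eqref{eq:G_alpha_axis} that axis is collinear with $[-\sqrt{1-\lambda}\sin\alpha,\ \sqrt{1-\lambda}(1-\cos\alpha),\ \sqrt{\lambda}\sin\alpha]$, so $\sqrt{1-\lambda}\,n_z+\sqrt{\lambda}\,n_x=0$. Consequently $\beta_2=-\alpha$ is itself a solution of Eq.~\eqref{im1} when $\beta_1=-\alpha$ (two rotations about one axis compose to a rotation about that same axis), hence $\hat f(-\alpha)=-\alpha\pmod{2\pi}$ and $F'_{xr,\alpha}$ becomes $G'(\alpha,-\alpha)^2$ up to $2\pi$ shifts: two rotations about the \emph{same} axis, each of angle $4\phi'$ [since by Eq.~\eqref{eq:G_alpha_angle} the half-angle cosine of one factor is $c_\alpha^2+s_\alpha^2(1-2\lambda)=1-2\lambda s_\alpha^2=\cos 2\phi'$], so the angles simply add to $\phi=8\phi'$ and $\cos\frac{\phi}{2}=\pm\cos 4\phi'$. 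You never establish that $\beta_2''=\hat f(-\alpha)$ equals $-\alpha$; you instead hope the $\beta_2''$-dependence ``cancels'' in a trace or spherical-geometry computation. It does not cancel for a generic $\beta_2''$ constrained only by the bisector-plane condition --- the rotation angle of $G'(\alpha,\beta_2'')\,G'(\alpha,-\alpha)$ genuinely depends on $\beta_2''$ --- so the reduction of $\cos\frac{\phi}{2}$ to a function of your $\mu$ alone is exactly the gap.

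Two smaller problems. First, your ``more careful'' rewriting $R_{\vec n}(\phi)=R_{R}(\alpha)R_{\psi_0}(\beta_2'')R_{R}(\alpha)R_{\psi_0}(-\alpha)$ is not an equality but a cyclic permutation of the correct product (it is the conjugate of $F'_{xr,\alpha}$ by $R_{R}(\alpha)$); this happens to be harmless for the rotation angle, but it means your grouping $\bigl(R_{R}(\alpha)R_{\psi_0}(\beta_2'')\bigr)\bigl(R_{R}(\alpha)R_{\psi_0}(-\alpha)\bigr)$ is not the natural $G'(\alpha,\beta_2'')\,G'(\alpha,-\alpha)$, and any axis you compute from it is not the $\vec n$ that $\hat f$ constrains. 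Second, your angle bookkeeping is muddled: with the paper's shorthand $c_\mu\equiv\cos\frac{\mu}{2}$, the single factor has $c_\mu=\cos 2\phi'$, i.e.\ $\mu=4\phi'$ (this is where the $4$ is born), and the final $\cos 4\phi'$ comes from the \emph{doubling} $\phi=2\mu=8\phi'$, not from ``$c_\mu=\cos 4\phi'$'' as you write. Your fallback of substituting $\beta_1''=-\alpha$ directly into Eqs.~\eqref{eq:intro_fxr_alpha_im} and \eqref{eq:fxr_alpha_phi} would indeed terminate, but only after you first solve Eq.~\eqref{eq:intro_fxr_alpha_im} to learn $\beta_2''=-\alpha\pmod{2\pi}$ --- which is the same missing step.
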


\begin{proof}
Inspired by Ref. \cite{Long}, we find that when the two phase angles have opposite sign, i.e. $G'(\alpha,-\alpha) = R_{\psi_0}(-\alpha) R_{R}(\alpha)$, the rotation axis of $G'(\alpha,-\alpha)$ satisfies Eq.~\eqref{im1}, and the rotation angle $\phi_d$ is $4\phi'$ with
\begin{equation}
\sin\phi'=\sqrt{\lambda}\sin\frac{\alpha}{2}. \label{phi'}
\end{equation}
In fact, from Eq.~\eqref{eq:G_alpha_angle} we have $c_{\phi_d} = c^2_\alpha + s^2_\alpha(1-2\lambda) = 1-2\lambda s^2_\alpha = \cos 2\phi'$. From Eq.~\eqref{eq:G_alpha_axis}, we know $\vec{n}_d$ is collinear with $\left[ -\sin\alpha \sin\theta,\ (1-\cos\alpha) \sin\theta,\ \sin\alpha(1-\cos\theta) \right]$, which is collinear with 
$$[-\sqrt{1-\lambda} \sin\alpha,\ \sqrt{1-\lambda}\, (1-\cos\alpha),\ \sqrt{\lambda}\sin\alpha ].$$
So $\sqrt{1-\lambda}\, n_{dz} + \sqrt{\lambda}\, n_{dx}=0$, which is Eq.~\eqref{im1}.

Let $\beta_1''= -\alpha$ and $\beta_2''=\hat f(\beta_1'')$. Then 
the rotation axis of $R_{\vec{n}}(\phi)=G'(\alpha,\beta_2'') G'(\alpha,\beta_1'')$ satisfy Eq.~\eqref{im1} and the rotation angle is $\phi=8\phi'$, because the constituent two rotations have the same axis satisfying Eq.~\eqref{im1}, and both have rotation angle $4\phi'$. Similar to step 4 in the proof of Lemma~\ref{thm:phi zero}, from  Eq.~\eqref{eq:fxr_alpha_phi} of $\cos\frac{\phi}{2}$ we know that when one of the $\beta_i$ is added by $2\pi$, the sign of $c_\phi$ would change. Thus we can only guarantee that $(\beta_1'',\beta_2'')$ causes the rotation angle $\phi$ of $R_{\vec{n}}(\phi)$ to satisfy $c_\phi \in\{\cos4\phi',-\cos4\phi'\}$.
\end{proof}

\subsection{Lower bound of  $k$}\label{subsec:lower_k}
From the above three subsections, we know that there exists a continuous function $\beta_2=\hat f(\beta_1)$ defined on the closed interval $[-\pi,\pi]$ such that:
\begin{enumerate}
    \item the pair of variable parameters $(\beta_1,\beta_2)$ determined by $\hat{f}$ satisfy Eq.~\eqref{im1};
    \item there exists $\beta_1'$ such that the rotation angle $\phi=0$;
    \item there exists $\beta_1''$ such that $\cos\frac{\phi}{2} \in\{\cos4\phi',-\cos4\phi'\}$.
\end{enumerate}

Substitute $\beta_2=\hat f(\beta_1)$ into Eq.~\eqref{eq:fxr_alpha_phi} of $\cos\frac{\phi}{2}$, and taking $\arccos$ on both sides, we obtain another continuous function $\frac{\phi}{2}=g(\beta_1)$ with range $[0,\pi]$. Let
\begin{equation}
    \phi_0:=\arccos|\cos 4\phi'| \in [0,\frac{\pi}{2}]. \label{phi_0}
\end{equation}
Then $\arccos(-|\cos 4\phi'|) = \pi-\phi_0 \geq \phi_0$, and thus $g(\beta_1'')\geq \phi_0$. Note that $g(\beta_1')=\arccos(1)=0$.
Therefore, from the continuity of function $g$, the range of $g(\beta_1)$ must contain the interval $[0,\phi_0]$ by the intermediate value theorem. Note that it may not contain $\pi-\phi_0$, as Fig. \ref{fig:phi_g_theta} shows.

\begin{figure}[ht]
\centering
\includegraphics[width=0.5\textwidth]{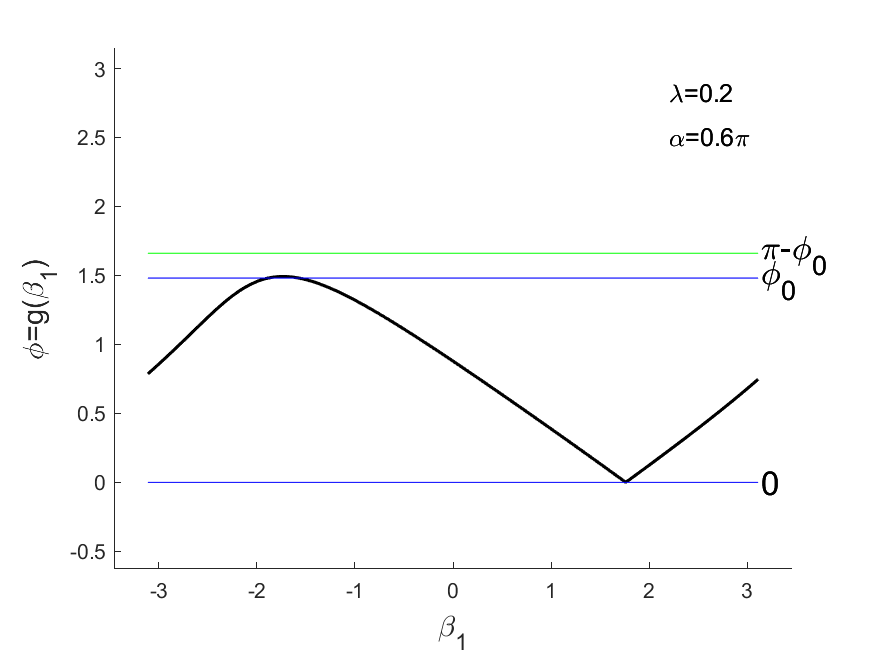}
\caption{\label{fig:phi_g_theta} Image of $\frac{\phi}{2}=g(\beta_1),\ \beta_1\in[-\pi,\pi]$ when $\lambda=0.2, \alpha=0.6\pi$. In this example, the range of $g$ contains interval $[0,\phi_0]$ but does not contain $\pi-\phi_0$. }
\end{figure}

Because the range of $g(\beta_1)$ contains $[0,\phi_0]$, when the number of iterations $k$ is big enough such that
\begin{equation}
    \pi/k \leq \phi_0 \label{k_condition},
\end{equation}
there exits 
$\beta_1'''$ such that $\pi/k=g(\beta_1''')$. 
Recall that the equation obtained from letting the real part of Eq.~\eqref{exact_cond1} to be zero is 
\begin{equation}
    0 = \sqrt{1-\lambda} \cos{(k\frac{\phi}{2})} - \sqrt{\lambda} \sin{(k\frac{\phi}{2})}\ n_y, \tag{real 1} \label{real 1.1}
\end{equation}
where $n_y$ satisfies 
$$
s_\phi n_y = 2\sqrt{\lambda(1-\lambda)} \Big( -\sin\alpha \cos\frac{\beta_1}{2} \sin\frac{\beta_2}{2} + 2(1-2\lambda) \sin^2\frac{\alpha}{2} \sin\frac{\beta_1}{2} \sin\frac{\beta_2}{2} \Big),
$$
from Eq.~\eqref{eq:fxr_alpha_ny}.
Substituting the two continuous function $\beta_2=\hat f(\beta_1)$ and $\frac{\phi}{2}=g(\beta_1)$ into the RHS of Eq.~\eqref{real 1.1}, we obtain a continuous function $F(\beta_1)$ defined on the closed interval $[-\pi,\pi]$.
Note that $s_{k\phi}/s_{\phi}$ has limit $1$ when $\phi\to 0$, thus $F$ is actually continuous at $\phi=0$.

To sum up, we have
\begin{enumerate}
    \item $F(\beta_1')=\sqrt{1-\lambda}>0$, since $\frac{\phi}{2}=g(\beta_1')=0$;
    \item $F(\beta_1''')=-\sqrt{1-\lambda}<0$, since $\frac{\phi}{2}=g(\beta_1''')=\pi/k$.
\end{enumerate}
Therefore, by the intermediate value theorem of continuous function on a closed interval, there exists $\hat{\theta}_1$ such that $F(\hat{\beta}_1)=0$. Let $\hat\beta_2=\hat f(\hat\beta_1)$. Then $(\hat\beta_1,\hat\beta_2)$ satisfies the system of equations $\eqref{im1}, \eqref{real1}$.

It is easy to show that $\arccos |\cos x| = \big|x\  \mathrm{mod} [-\frac{\pi}{2}, \frac{\pi}{2}] \big| $. Recall from Eqs. \eqref{phi_0}, \eqref{phi'} that $\phi_0 = \arccos|\cos 4\phi'|$ where $\phi'$ satisfies $\sin\phi'=\sqrt{\lambda}\sin\frac{\alpha}{2}$. Therefore by Eq.~\eqref{k_condition}, whenever  $k$, the number  of iterations of  $R_{\vec{n}}(\phi)$, is greater than the following $k_{\mathrm{lower}}$, there exits a solution $(\beta_1,\beta_2)$ to the system of equations \eqref{real1}, \eqref{im1}. This completes the proof of Theorem~\ref{thm:main}.
\begin{equation}\label{eq:k_lower_alpha}
    k_{\mathrm{lower}} = \frac{\pi}{ \Big| 4\arcsin(\sqrt{\lambda}\sin\frac{\alpha}{2}) \mod [-\frac{\pi}{2},\frac{\pi}{2}] \Big| }
\end{equation}


\subsection{Differences in proof when $\beta$ is fixed}\label{subsec:differ}
We now consider the case when $\beta$ in $G(\alpha,\beta)$ is fixed. It is almost the same as in the previous three subsections to prove the existence of solution $(\alpha_1,\alpha_2)$ to Eqs.~\eqref{real1}, \eqref{im1}, or to be more precise, Eqs.~\eqref{eq:intro_fxr_beta_real}, \eqref{eq:intro_fxr_beta_im}, when the number of iterations $k$ of $F_{xr,\beta}=G(\alpha_2,\beta) G(\alpha_1,\beta)$ is greater than $k_{\text{lower}}$ as shown in Eq.~\eqref{k_lower2_intro}. Because we are treating the iteration $F_{xr,*}$ as a whole.

However, there are at least three differences in the proof worth mentioning: 
\begin{enumerate}
    \item In parallel to Section~\ref{subsec:continuous}, we need to obtain the continuous function $\alpha_1=f(\alpha_2)$ from Eq.~\eqref{im1} rather than $\alpha_2=f(\alpha_1)$. This is because in the next step corresponding to Section~\ref{subsec:special beta 1}, we will find a specific angle $\alpha_2'$ such that $R_{\vec{n}}(\phi)$ keeps $\ket{R}$ still. So $\alpha_2$ needs to be the variable parameter of $(\alpha_1,\alpha_2)$ in our proof.
    \item Rather than showing that there exits a special angle $\beta_1'$ such that $R_{\vec{n}}(\phi)$ keeps $\ket{\psi_0}$ still, we will instead prove that there exists an angle $\alpha_2'$ such that $R_{\vec{n}}(\phi)$ keeps $\ket{R}$ still, this follows easily from Fig.~\ref{fig:2d_rotate}.
    \item Parallel to Section~\ref{subsec:theta2}, we will instead let $\alpha_2''=-\beta$, and Eq.~\eqref{phi'} now becomes
    \begin{equation}
        \sin\phi'=\sqrt{\lambda}\sin\frac{\beta}{2}.
    \end{equation}
    Thus, the lower bound $k_{\text{lower}}$ with $\beta$ being fixed [see Eq.~\eqref{k_lower2_intro}] is obtained by changing $\alpha$ to $\beta$ in Eq.~\eqref{eq:k_lower_alpha}.
\end{enumerate}

\begin{figure}[ht]
\centering

\definecolor{xdxdff}{rgb}{0.49019607843137253,0.49019607843137253,1.}
\definecolor{uuuuuu}{rgb}{0.26666666666666666,0.26666666666666666,0.26666666666666666}
\begin{tikzpicture}[line cap=round,line join=round,x=1.0cm,y=1.0cm,scale=4]
\draw [line width=0.4pt] (0.,0.) circle (1.cm);
\draw [line width=0.4pt] (0.,0.)-- (0.,1.2);
\draw [line width=0.4pt] (0.,0.)-- (-0.5770512675001516,1.0521462990841475);
\draw [line width=0.4pt,dotted] (-0.7091458651201455,0.705061800116132)-- (0.7091458651201455,0.705061800116132);
\draw [line width=0.4pt,dashed] (-0.8432532715029178,0.5375164370506507)-- (0.,1.);
\begin{scriptsize}
\draw [fill=uuuuuu] (0.,0.) circle (0.5pt);
\draw[color=uuuuuu] (0.08719610163946265,0.015282339062338599) node {$A$};
\draw [fill=xdxdff] (0.,1.2) circle (0.5pt);
\draw[color=xdxdff] (0.05960366111800694,1.2508123631547141) node {$\ket{R}$};

\draw [line width=0.4pt] (0.,0.)-- (0.,-1.1);
\draw [fill=xdxdff] (0,-1.1) circle (0.5pt);
\draw[color=xdxdff] (0.1,-1.1) node {$\ket{T}$};

\draw [fill=xdxdff] (-0.5770512675001516,1.0521462990841475) circle (0.5pt);
\draw[color=xdxdff] (-0.5106401096587445,1.1220474474924813) node {$\ket{\psi_0}$};

\draw [fill=uuuuuu] (0.,1.) circle (0.5pt);
\draw[color=uuuuuu] (-0.0749327947410379,1.0791258089384037) node {$E$};
\draw [fill=uuuuuu] (-0.8432532715029178,0.5375164370506507) circle (0.5pt);
\draw[color=uuuuuu] (-0.7743012079748769,0.48128870050660927) node {$E$};
\draw [fill=uuuuuu] (-0.5377652783099913,0.705061800116132) circle (0.5pt);
\draw[color=uuuuuu] (-0.48611349586189506,0.5732636402653468) node {$F (F')$};
\draw [fill=uuuuuu] (-0.7091458651201455,0.705061800116132) circle (0.5pt);
\draw[color=uuuuuu] (-0.808025301945545,0.72) node {$G$};
\draw [fill=uuuuuu] (0.7091458651201455,0.705061800116132) circle (0.5pt);
\draw[color=uuuuuu] (0.6789006594884576,0.6253827727952982) node {$L$};

\draw [fill=uuuuuu] (-0.48,0.88) circle (0.5pt);
\draw[color=uuuuuu] (-0.465,0.95) node {$C$};

\draw[->,color=red] (0.02,0.98) .. controls (0.12,0.92) and (0.12, 1.08) .. (0.02,1.02);
\draw[color=red] (0.12,1.05) node {$\alpha_1$};

\draw[->,color=red] (-0.05,1) --  (-0.5377652783099913,0.74);
\draw[color=red] (-0.29,0.91) node {$\beta$};

\draw[->,color=red] (-0.55,0.74) .. controls (-0.82,0.8) and (-0.82,0.6) .. (-0.55,0.66);
\draw[color=red] (-0.68,0.8) node {$\alpha_2$};

\draw[<-,color=red] (-0.05,0.94) --  (-0.5377652783099913,0.68);
\draw[color=red] (-0.29,0.75) node {$\beta$};

\end{scriptsize}
\end{tikzpicture}

\caption{\label{fig:2d_rotate} Bloch sphere interpretation of the effect of $F_{xr,\beta}=S_r(\beta) S_o(\alpha_2) S_r(\beta) S_o(\alpha_1)$ on state vector $\ket{R}=[0,0,1]$. This is a view of the Bloch sphere along its $y$-axis.
$S_o(\alpha)$ represents a rotation about $\ket{R}$ (i.e. $\overline{AE}$) by $\alpha$, and $S_r(\beta)$ represents a rotation about $\ket{\psi_0}$ (i.e. $\overline{AC}$) by $\beta$.
The four red arrow lines represents trajectory $E-\overset{\frown}{EF'}-\overset{\frown}{F'GF}-\overset{\frown}{FE}$ of $\ket{R}$ under operator $F_{xr,\beta}$, and shows that there exists an angle $\alpha_2$ such that $F_{xr,\beta}$ keeps $\ket{R}$ still.}
\end{figure}
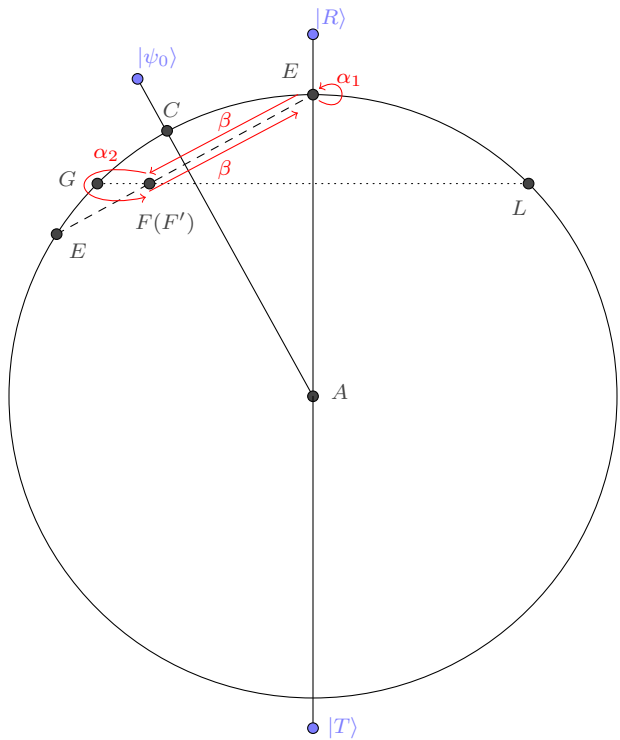

\section{Conclusion}\label{sec:conclusion}
In this paper we have proposed a  search framework with adjustable parameters, in which given the phase oracle $S_o(\alpha)$ with an arbitrary angle $\alpha$ or the phase rotation $S_r(\beta)$  with an arbitrary angle $\beta$, we can always construct a quantum exact search algorithm without sacrificing the quadratic speedup advantage. In technique, 
we have  proposed the fixed-axis-rotation (FXR) method which computes the state of a two-dimensional system in a concise way.
Two applications of the proposed search framework have been developed. 
Possible future research may include extending this framework to fixed-point or robust quantum search \cite{fixed_point,robust}, as well as looking for more applications.

\bibliographystyle{unsrt}
\bibliography{sample}

\begin{thebibliography}{10}

\bibitem{Grover}
Lov~K. Grover.
\newblock Quantum mechanics helps in searching for a needle in a haystack.
\newblock {\em Phys. Rev. Lett.}, 79:325--328, Jul 1997.

\bibitem{QAA}
Gilles Brassard, Peter Hoyer, Michele Mosca, and Alain Tapp.
\newblock Quantum amplitude amplification and estimation.
\newblock {\em AMS Contemporary Mathematics Series}, 305, 06 2000.

\bibitem{arbi_phase}
Peter Hoyer.
\newblock On arbitrary phases in quantum amplitude amplification.
\newblock {\em Physical Review A}, 62, 06 2000.

\bibitem{Long}
G.~L. Long.
\newblock Grover algorithm with zero theoretical failure rate.
\newblock {\em Phys. Rev. A}, 64:022307, Jul 2001.

\bibitem{polynomial}
Robert Beals, Harry Buhrman, Richard Cleve, Michele Mosca, and Ronald De~Wolf.
\newblock Quantum lower bounds by polynomials.
\newblock {\em Journal of the ACM (JACM)}, 48(4):778--797, 2001.

\bibitem{d2p}
Tanay Roy, Liang Jiang, and David~I. Schuster.
\newblock Deterministic grover search with a restricted oracle.
\newblock {\em Phys. Rev. Research}, 4:L022013, Apr 2022.

\bibitem{highly}
Markus Hunziker and David~A. Meyer.
\newblock Quantum algorithms for highly structured search problems.
\newblock {\em Quantum Information Processing}, 1(3):145--154, Jun 2002.

\bibitem{boyer1998tight}
Michel Boyer, Gilles Brassard, Peter H{\o}yer, and Alain Tapp.
\newblock Tight bounds on quantum searching.
\newblock {\em Fortschritte der Physik: Progress of Physics}, 46(4-5):493--505,
  1998.

\bibitem{mastermind}
Lvzhou Li, Jingquan Luo, and Yongzhen Xu.
\newblock Winning mastermind overwhelmingly on quantum computers.
\newblock {\em arXiv:2207.09356}, 2022.

\bibitem{decision_tree_survey}
Harry Buhrman and Ronald {de Wolf}.
\newblock Complexity measures and decision tree complexity: a survey.
\newblock {\em Theoretical Computer Science}, 288(1):21--43, 2002.
\newblock Complexity and Logic.

\bibitem{quantum_alg1.2}
Harry Buhrman, Christoph D\"{u}rr, Mark Heiligman, Peter H\o{}yer,
  Fr\'{e}d\'{e}ric Magniez, Miklos Santha, and Ronald de~Wolf.
\newblock Quantum algorithms for element distinctness.
\newblock {\em SIAM Journal on Computing}, 34(6):1324--1330, 2005.

\bibitem{quantum_alg2.1}
A.~Ambainis.
\newblock Quantum walk algorithm for element distinctness.
\newblock In {\em 45th Annual IEEE Symposium on Foundations of Computer
  Science}, pages 22--31, 2004.

\bibitem{quantum_alg2.2}
Andris Ambainis.
\newblock Quantum walk algorithm for element distinctness.
\newblock {\em SIAM Journal on Computing}, 37(1):210--239, 2007.

\bibitem{quantum_lower1}
Yaoyun Shi.
\newblock Quantum lower bounds for the collision and the element distinctness
  problems.
\newblock In {\em The 43rd Annual IEEE Symposium on Foundations of Computer
  Science, 2002. Proceedings.}, pages 513--519, 2002.

\bibitem{quantum_lower2}
Scott Aaronson and Yaoyun Shi.
\newblock Quantum lower bounds for the collision and the element distinctness
  problems.
\newblock {\em J. ACM}, 51(4):595–605, jul 2004.

\bibitem{quantum_lower3}
Andris Ambainis.
\newblock Polynomial degree and lower bounds in quantum complexity: Collision
  and element distinctness with small range.
\newblock {\em Theory of Computing}, 1(3):37--46, 2005.

\bibitem{eedp}
Guanzhong Li and Lvzhou Li.
\newblock Exact quantum algorithm for the element distinctness promise problem.
\newblock {\em arXiv:2211.05443}, 2022.

\bibitem{Portugal_2018}
Renato Portugal.
\newblock Element distinctness revisited.
\newblock {\em Quantum Information Processing}, 17(7):163, May 2018.

\bibitem{Portugal_book}
Renato Portugal.
\newblock {\em Element Distinctness}, pages 201--221.
\newblock Springer International Publishing, Cham, 2018.

\bibitem{nielsen_chuang_2010}
Michael~A. Nielsen and Isaac~L. Chuang.
\newblock {\em Quantum Computation and Quantum Information: 10th Anniversary
  Edition}.
\newblock Cambridge University Press, 2010.

\bibitem{fixed_point}
Theodore~J. Yoder, Guang~Hao Low, and Isaac~L. Chuang.
\newblock Fixed-point quantum search with an optimal number of queries.
\newblock {\em Phys. Rev. Lett.}, 113:210501, Nov 2014.

\bibitem{robust}
Yongzhen Xu, Delong Zhang, and Lvzhou Li.
\newblock Robust quantum walk search without knowing the number of marked
  vertices.
\newblock {\em Phys. Rev. A}, 106:052207, Nov 2022.

\end{thebibliography}


\begin{appendices}
\section{Proof of Lemma~\ref{lem:S_o decomp}}\label{proof:S_o}
{\it Proof of Lemma~\ref{lem:S_o decomp}:} Suppose $S_o(\alpha)=e^{i\varphi} R_{\vec{n}}(\phi)$ [see their definition in Eqs.~\eqref{S_o},\eqref{R n matrix}], tracing (i.e. summing up diagonal elements of) the matrix on both sides, we have $1+e^{i\alpha}=2\cos\phi\, e^{i\varphi}$. By parallelogram rule of adding two complex value, the left hand side (LHS) has phase angle $\alpha$ (angle between the positive $x$-axis and a ray from the origin to the point $(x,y)$ representing complex value $z=x+iy$ in the $xy$-plane), so we can take $\varphi=\alpha/2$. Note that $S_o(\alpha)$ is diagonal, so it's obvious that $e^{-i\alpha/2} S_o(\alpha)=\cos\frac{\alpha}{2}\ I -i\sin\frac{\alpha}{2}\  Z$, which by Lemma~\ref{lem:Bloch}.(2) represents in the Bloch sphere a rotation about $\ket{R}=[0,0,1]^T$ by an angle of $\alpha$.

Similarly, tracing both sides of $S_r(\beta)=e^{i\varphi} R_{\vec{n}}(\phi)$ [defined in Eq.~\eqref{s_r}], we have $1+e^{-i\beta}=2\cos\phi\, e^{i\varphi}$, so $\varphi=-\beta/2$. By comparing the two sides of equation $R_{\vec{n}}(\phi) = e^{i\beta/2} S_r(\beta)$, we obtain the rotation angle $\phi$ as follows:
\begin{align*}
\cos\phi &= e^{i\beta/2} \frac{1}{2} (s_{11}+s_{22}) \\
&= e^{i\beta/2} \frac{1+e^{-i\beta}}{2} \\
&= \cos\frac{\beta}{2},
\end{align*}
where $s_{ij}$ is the $(i,j)$-element of $S_r(\beta)$ for $i,j\in\{1,2\}$. Similarly for the $z$-coordinate $n_z$ of the rotation axis $\vec{n}$, we have
\begin{align*}
s_\phi n_z &= e^{i\beta/2} \frac{1}{2} (s_{11}-s_{22}) i \\
&= e^{i\beta/2} \frac{-1}{2} (1-e^{-i\beta})(1-2\lambda)i \\
&= -i \sin\frac{\beta}{2} (1-2\lambda)i \\
&= \sin\frac{\beta}{2} \cos\theta.
\end{align*}
Recall that $\sqrt{\lambda}=\sin\frac{\theta}{2}$ and $\sqrt{1-\lambda}=\cos\frac{\theta}{2}$ from Eq.~\eqref{eq:lambda_trigono}, thus $\cos\theta = 1 - 2\sin^2\frac{\theta}{2} = 1 - 2\lambda$ in the last equality. As for the $x$-coordinate $n_x$ of the rotation axis $\vec{n}$: 
\begin{align*}
s_\phi n_x &= e^{i\beta/2} \frac{1}{2} (s_{12} + s_{21}) i \\
&= e^{i\beta/2} (-1)(1-e^{-i\beta}) \sqrt{\lambda(1-\lambda)}\ i\\
&= -i \sin(\frac{\beta}{2})\ 2\sqrt{\lambda(1-\lambda)}\ i \\
&= \sin\frac{\beta}{2} \sin\theta.
\end{align*}
Note that in the last equality we use $\sin\theta = 2\sin\theta \cos\theta = 2\sqrt{\lambda(1-\lambda)}$. Finally for the $y$-coordinate $n_y$ of the rotation axis $\vec{n}$: 
\[
s_\phi n_y = e^{i\beta/2} \frac{1}{2} (s_{21} - s_{12}) = 0.
\]
Therefore $e^{i\beta/2} S_r(\beta) = \cos\frac{\beta}{2}\ I -i\sin\frac{\beta}{2}(\sin\theta\ X +\cos\theta\ Z)$, which is a rotation about $\ket{\psi_0}=[\sin\theta, 0, \cos\theta]$ by an angle of $\beta$.
\hfill $\blacksquare$ \par

\section{Rotation parameters of $F'_{xr,\alpha}$ and $F'_{xr,\beta}$ }\label{app:Fxr}
We will calculate the rotation parameters $c_\phi$ and $s_\phi\vec{n}$ of $F'_{xr,\alpha} = G'(\alpha,\beta_2) G'(\alpha,\beta_1)$ and $F'_{xr,\beta} = G'(\alpha_2,\beta) G'(\alpha_1,\beta)$, filling in details of the proof outline of Lemma~\ref{lem:decomp_fxr}.

We first calculate the effect of $G'(\alpha,\beta)$ directly as follows:
\begin{align}
G'(\alpha,\beta) &= R_{\psi_0}(\beta) R_{R}(\alpha) \\
&= [c_\beta I -i\, s_\beta(s_{2\theta} X +c_{2\theta} Z)]\, [c_\alpha I -i\, s_\alpha  Z] \\
&= I(c_\beta c_\alpha-s_\beta s_\alpha c_{2\theta}) -iX(s_\beta c_\alpha s_{2\theta})-iY(-s_\beta s_\alpha s_{2\theta}) -iZ(c_\beta s_\alpha + S_\beta c_\alpha c_{2\theta}).
\end{align}
Therefore, we have
\begin{align}
c_\phi &= c_\beta c_\alpha -s_\beta s_\alpha c_{2\theta}, \label{eq:G_alpha_angle}\\
s_\phi \vec{n} &= (s_\beta c_\alpha s_{2\theta},\, -s_\beta s_\alpha s_{2\theta},\, c_\beta s_\alpha +s_\beta c_\alpha c_{2\theta} ). \label{eq:G_alpha_axis}
\end{align}

To simplify notations, we order the trigonometric functions in each monomial by the subscript's order $(\alpha_1,\alpha_2,\beta_1,\beta_2)$. For example, $ccss$ stands for $c_{\alpha_1}c_{\alpha_2}s_{\beta_1}s_{\beta_2}$, and furthermore, $ccs_{1+2}$ stands for $c_{\alpha_1} c_{\alpha_2} s_{\beta_1+\beta_2}$.
With this convention in mind, we first calculate the rotation angle $\phi$ of $F'_{xr} = G'(\alpha_2,\beta_2) G'(\alpha_1,\beta_1)$ using Lemma~\ref{lem:Bloch}.(4) and Eqs.~\eqref{eq:G_alpha_angle}, \eqref{eq:G_alpha_axis} as follows.
\begin{align}
c_\phi &= (c_{\alpha_1} c_{\beta_1} -s_{\alpha_1} s_{\beta_1} c_{2\theta}) (c_{\alpha_2} c_{\beta_2} -s_{\alpha_2} s_{\beta_2} c_{2\theta}) \notag\\
&\quad -ccss\cdot s_{2\theta}^2 -ssss\cdot s_{2\theta}^2 -(s_{\alpha_1} c_{\beta_1} +c_{\alpha_1} s_{\beta_1} c_{2\theta}) (s_{\alpha_2} c_{\beta_2} +c_{\alpha_2} s_{\beta_2} c_{2\theta}) \\
&= cccc -c_{2\theta}(scsc+cscs) +c_{2\theta}^2ssss \notag\\
&\quad -ccss-ssss +c_{2\theta}^2(ccss+ssss) -(sscc+c_{2\theta}(cssc+sccs)+c_{2\theta}^2ccss) \\
&=(cccc-ccss-sscc-ssss) -(1-2\lambda)(scsc+cscs+cssc+sccs) +(1-2\lambda)^2 2ssss.
\end{align}
Regrouping terms by the oder of $\lambda$, we obtain coefficient of the constant term $\lambda^0=1$ as
\begin{align}
&ccc_{1+2} -ssc_{1+2} -scs_{1+2} -css_{1+2}\\
=& c_{1+2}c_{1+2} -s_{1+2}s_{1+2} \\
=& \cos\frac{\alpha_1+\alpha_2+\beta_1+\beta_2}{2}.
\end{align}
The coefficients of $\lambda$ and $\lambda^2$ are $2\times(s_{1+2}s_{1+2} -4ssss)$ and $8ssss$, respectively.
Therefore, for $F'_{xr,\alpha} = G'(\alpha,\beta_2) G'(\alpha,\beta_1)$, its rotation angle $\phi$ satisfies
\begin{align}\label{eq:fxr_alpha_phi}
c_\phi &= \cos(\alpha+\frac{\beta_1+\beta_2}{2}) +2\lambda(\sin\alpha \sin\frac{\beta_1+\beta_2}{2} -4\sin^2\frac{\alpha}{2} \sin\frac{\beta_1}{2} \sin\frac{\beta_2}{2} ) 
\notag\\ &\quad
+ 8\lambda^2 \sin^2\frac{\alpha}{2} \sin\frac{\beta_1}{2} \sin\frac{\beta_2}{2}.
\end{align}
And for $F'_{xr,\beta} = G'(\alpha_2,\beta) G'(\alpha_1,\beta)$, its rotation angle $\phi$ satisfies
\begin{align}\label{eq:fxr_beta_phi}
    c_\phi &= \cos(\frac{\alpha_1+\alpha_2}{2}+\beta) + 2\lambda (\sin\frac{\alpha_1+\alpha_2}{2} \sin\beta -4\sin\frac{\alpha_1}{2} \sin\frac{\alpha_2}{2} \sin^2\frac{\beta}{2}) 
    \notag \\ &\quad
    + 8\lambda^2 \sin\frac{\alpha_1}{2} \sin\frac{\alpha_2}{2} \sin^2\frac{\beta}{2}.
\end{align}

We now calculate the $x-$coordinates $n_x$ of $F'_{xr}$'s rotation axis $\vec{n}$ as follows
\begin{align}
s_\phi n_x &= c_2 n_{1x} +c_1 n_{2x} +n_{2y} n_{1z} -n_{1y} n_{2z} \\
&= (c_{\alpha_2} c_{\beta_2} -s_{\alpha_2} s_{\beta_2} c_{2\theta})   c_{\alpha_1} s_{\beta_1} s_{2\theta} 
 + (c_{\alpha_1} c_{\beta_1} -s_{\alpha_1} s_{\beta_1} c_{2\theta})   c_{\alpha_2} s_{\beta_2} s_{2\theta}
\notag \\ &\quad
-s_{\beta_2} s_{\alpha_2} s_{2\theta} (s_{\alpha_1} c_{\beta_1} +c_{\alpha_1} s_{\beta_1} c_{2\theta}) 
+s_{\beta_1} s_{\alpha_1} s_{2\theta} (s_{\alpha_2} c_{\beta_2} +c_{\alpha_2} s_{\beta_2} c_{2\theta}) \\
s_{2\theta}\times  &= ccsc -csss\cdot c_{2\theta} +cccs -scss\cdot c_{2\theta}
\notag \\ &\quad
-sscs -csss\cdot c_{2\theta} +sssc +scss\cdot c_{2\theta} \\
s_{2\theta}\times  &= c_{1-2}sc +c_{1+2}cs -c_{2\theta} 2csss \\
&\quad \text{or:}\ ccs_{1+2} -sss_{1-2} -c_{2\theta} 2csss.
\end{align}
The notation `$s_{2\theta}\times$' is actually on the right hand side of the equal sign, we put it on the left to introduce less brackets. Therefore, for $F'_{xr,\alpha}$, we have
\begin{equation}
s_\phi n_x = 2\sqrt{\lambda(1-\lambda)} \Big( \sin\frac{\beta_1}{2} \cos\frac{\beta_2}{2} +\cos\alpha \cos\frac{\beta_1}{2} \sin\frac{\beta_2}{2} -(1-2\lambda)\sin\alpha \sin\frac{\beta_1}{2} \sin\frac{\beta_2}{2} \Big). \label{eq:fxr_alpha_nx}
\end{equation}
And for $F'_{xr,\beta}$, we have
\begin{equation}
s_\phi n_x = 2\sqrt{\lambda(1-\lambda)} \Big(\cos\frac{\alpha_1}{2} \cos\frac{\alpha_2}{2} \sin\beta - 2(1-2\lambda) \cos\frac{\alpha_1}{2} \sin\frac{\alpha_2}{2} \sin^2\frac{\beta}{2} \Big). \label{eq:fxr_beta_nx}
\end{equation}

We now calculate the $y-$coordinates $n_y$ of $F'_{xr}$'s rotation axis $\vec{n}$ as follows
\begin{align}
s_\phi n_y &= c_2 n_{1y} +c_1 n_{2y} -n_{2x} n_{1z} +n_{1x} n_{2z} \\
&= -(c_{\alpha_2} c_{\beta_2} -s_{\alpha_2} s_{\beta_2} c_{2\theta}) s_{\beta_1} s_{\alpha_1} s_{2\theta}
   -(c_{\alpha_1} c_{\beta_1} -s_{\alpha_1} s_{\beta_1} c_{2\theta}) s_{\beta_2} s_{\alpha_2} s_{2\theta}
\notag \\ &\quad
-c_{\alpha_2} s_{\beta_2} s_{2\theta}  (s_{\alpha_1} c_{\beta_1} +c_{\alpha_1} s_{\beta_1} c_{2\theta})
+c_{\alpha_1} s_{\beta_1} s_{2\theta}  (s_{\alpha_2} c_{\beta_2} +c_{\alpha_2} s_{\beta_2} c_{2\theta}) \\
s_{2\theta}\times &=
-scsc +ssss\cdot c_{2\theta} -cscs +ssss\cdot c_{2\theta}
\notag \\ &\quad
-sccs -ccss\cdot c_{2\theta} +cssc +ccss\cdot c_{2\theta}\\
s_{2\theta}\times &=
-s_{1-2}sc -s_{1+2}cs +2c_{2\theta}ssss \\
&\quad \text{or:}\ -scs_{1+2} +css_{1-2} +2c_{2\theta}ssss.
\end{align}
Therefore, for $F'_{xr,\alpha}$, we have
\begin{equation}
s_\phi n_y = 2\sqrt{\lambda(1-\lambda)} \Big( -\sin\alpha \cos\frac{\beta_1}{2} \sin\frac{\beta_2}{2} + 2(1-2\lambda) \sin^2\frac{\alpha}{2} \sin\frac{\beta_1}{2} \sin\frac{\beta_2}{2} \Big). \label{eq:fxr_alpha_ny}
\end{equation}
And for $F'_{xr,\beta}$, we have
\begin{equation}
s_\phi n_y = 2\sqrt{\lambda(1-\lambda)} \Big( -\sin\frac{\alpha_1}{2} \cos\frac{\alpha_2}{2} \sin\beta + 2(1-2\lambda) \sin\frac{\alpha_1}{2} \sin\frac{\alpha_2}{2} \sin^2\frac{\beta}{2} \Big). \label{eq:fxr_beta_ny}
\end{equation}

Finally, we calculate the $z-$coordinates $n_z$ of $F'_{xr}$'s rotation axis $\vec{n}$ as follows
\begin{align}
s_\phi n_z &= c_2 n_{1z} +c_1 n_{2z} +n_{2x} n_{1y} -n_{1x} n_{2y} \\
&= (c_{\alpha_2} c_{\beta_2} -s_{\alpha_2} s_{\beta_2} c_{2\theta}) (s_{\alpha_1} c_{\beta_1} +c_{\alpha_1} s_{\beta_1} c_{2\theta})
  +(c_{\alpha_1} c_{\beta_1} -s_{\alpha_1} s_{\beta_1} c_{2\theta}) (s_{\alpha_2} c_{\beta_2} +c_{\alpha_2} s_{\beta_2} c_{2\theta})
\notag \\ &\quad
-c_{\alpha_2} s_{\beta_2} s_{2\theta} \cdot  s_{\beta_1} s_{\alpha_1} s_{2\theta}
+c_{\alpha_1} s_{\beta_1} s_{2\theta} \cdot  s_{\beta_2} s_{\alpha_2} s_{2\theta}.
\end{align}
The last line is equal to $-scss +csss +c_{2\theta}^2(scss -csss)$. If we regard $c_{2\theta}$ as the pivot, and consider regrouping the terms by $1,c_{2\theta},c_{2\theta}^2$, we have
\begin{align}
s_\phi n_z &= (sccc +cscc -scss +csss) +c_{2\theta} c_{1+2}s_{1+2} -2c_{2\theta}^2 csss.
\end{align}
Therefore, for $F'_{xr,\alpha}$, we have
\begin{equation}
s_\phi n_z = \sin\alpha \cos\frac{\beta_1}{2} \cos\frac{\beta_2}{2} +(1-2\lambda) \cos\alpha \sin\frac{\beta_1+\beta_2}{2} -(1-2\lambda)^2 \sin\alpha \sin\frac{\beta_1}{2} \sin\frac{\beta_2}{2}. \label{eq:fxr_alpha_nz}
\end{equation}
And for $F'_{xr,\beta}$, we have
\begin{align}
s_\phi n_z &= \sin\frac{\alpha_1}{2} \cos\frac{\alpha_2}{2} \cos\beta +\cos\frac{\alpha_1}{2} \sin\frac{\alpha_2}{2}
\notag \\ &\quad
+(1-2\lambda) \cos\frac{\alpha_1+\alpha_2}{2} \sin\beta  -2(1-2\lambda)^2 \cos\frac{\alpha_1}{2} \sin\frac{\alpha_2}{2} \sin^2\frac{\beta}{2}. \label{eq:fxr_beta_nz}
\end{align}

\section{Continuity of $\beta_2=f(\beta_1)$}\label{app:f_continu}
\begin{lemma} \label{lem:continuum_f}
When $\beta_1\in[-\pi,\pi]$, the function $\beta_2=f(\beta_1)$ obtained by solving Eq.~\eqref{im1}, or equivalently Eq.~\eqref{eq:intro_fxr_alpha_im}, as shown in Section~\ref{subsec:continuous} is inherently continuous.
\end{lemma}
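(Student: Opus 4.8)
The idea is to strip away the $\arctan$ and work directly with the linear form of the constraint. First I would rewrite Eq.~\eqref{eq:intro_fxr_alpha_im} (equivalently Eq.~\eqref{im1}) as
\[
A(\beta_1)\sin\tfrac{\beta_2}{2}+B(\beta_1)\cos\tfrac{\beta_2}{2}=0,
\]
where $A(\beta_1)=\cos\frac{\beta_1}{2}\cos\alpha-\sin\frac{\beta_1}{2}(1-2\lambda)\sin\alpha$ and $B(\beta_1)=\sin\frac{\beta_1}{2}\big[(1-2\lambda)\cos\alpha+2\lambda\big]+\cos\frac{\beta_1}{2}\sin\alpha$; dividing by $A(\beta_1)$ and factoring out $\cos\frac{\beta_1}{2}$ recovers exactly Eq.~\eqref{2f1}, so this is the same $f$ that Section~\ref{subsec:continuous} defines by translating $\arctan$-segments. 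Writing $(A,B)=M\,(\sin\frac{\beta_1}{2},\cos\frac{\beta_1}{2})^T$ exhibits $(A,B)$ as the image of the nonzero vector $(\sin\frac{\beta_1}{2},\cos\frac{\beta_1}{2})$ under a fixed $2\times2$ matrix $M$ whose determinant equals $ad-bc=-1+2\lambda(1-\cos\alpha)$, the determinant of the Möbius map in Eq.~\eqref{2f1}. I would then split into two cases according to whether this determinant vanishes.

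\textbf{Nondegenerate case ($-1+2\lambda(1-\cos\alpha)\neq0$).} Since $M$ is invertible and $(\sin\frac{\beta_1}{2},\cos\frac{\beta_1}{2})\neq(0,0)$ for every $\beta_1$, the planar curve $\beta_1\mapsto(A(\beta_1),B(\beta_1))$ is continuous and nowhere zero on the interval $[-\pi,\pi]$. Because $[-\pi,\pi]$ is simply connected, there is a continuous ``argument'' function $\delta(\beta_1)$ with $(A,B)=\sqrt{A^2+B^2}\,(\cos\delta,\sin\delta)$; then $A\sin\frac{\beta_2}{2}+B\cos\frac{\beta_2}{2}=\sqrt{A^2+B^2}\,\sin(\frac{\beta_2}{2}+\delta)$ vanishes iff $\beta_2=2m\pi-2\delta(\beta_1)$, so $f(\beta_1):=-2\delta(\beta_1)$ is a continuous solution on $[-\pi,\pi]$. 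Finally I would observe that $-2\delta(\beta_1)$ and $2\arctan$ of the right-hand side of Eq.~\eqref{2f1} have the same tangent of half-angle, hence differ by a locally constant integer multiple of $2\pi$; this is precisely the ``translate a continuous segment up or down by $2\pi$'' operation of Section~\ref{subsec:continuous}, so the $f$ constructed there is this continuous function, i.e.\ inherently continuous.

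\textbf{Degenerate case ($-1+2\lambda(1-\cos\alpha)=0$, equivalently $\cos\alpha=\frac{2\lambda-1}{2\lambda}$).} Here $M$ has rank $\le1$, so its two rows are proportional, which forces $B(\beta_1)\equiv c_0\,A(\beta_1)$ for a fixed $c_0\in\mathbb{R}$ (with the limiting reading $A\equiv0$ when $\cos\alpha=0$ and $\lambda=\tfrac12$). Then for every $\beta_1$ outside the at-most-one point where $A(\beta_1)=0$ the constraint reduces to $\tan\frac{\beta_2}{2}=-c_0$ (respectively $\cos\frac{\beta_2}{2}=0$), so the solution $\beta_2$ is independent of $\beta_1$; we extend $f$ by that constant at the exceptional point, and $f$ is constant, hence trivially continuous.

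\textbf{Main obstacle.} The real work is the bookkeeping of the degenerate configurations: one must check that $(A,B)$ genuinely vanishes nowhere unless the Möbius determinant is zero (treating the interior, where it amounts to $\tan\frac{\beta_1}{2}$ hitting two prescribed values at once, and the endpoints $\beta_1=\pm\pi$ where $\cos\frac{\beta_1}{2}=0$ separately), and that the special values $\sin\alpha=0$, $\lambda=\tfrac12$, $\cos\alpha=0$ all fall under the ``constant $f$'' branch. A cleaner conceptual route, which I would mention as a remark, is that $\beta\mapsto\tan(\beta/2)$ is a homeomorphism $\mathbb{R}/2\pi\mathbb{Z}\to\widehat{\mathbb{R}}$ and a (possibly degenerate) Möbius map is a continuous self-map of $\widehat{\mathbb{R}}$, so Eq.~\eqref{2f1} defines a continuous self-map of the circle, which lifts to a continuous $f$ on the arc $[-\pi,\pi]$; but I would still carry out the explicit $(A,B)$ analysis for completeness.
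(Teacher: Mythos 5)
Your proposal is correct, and it reaches the lemma by a genuinely different route from the paper's. The paper works directly with the explicit formula \eqref{2f1}: it enumerates the points where the denominator blows up (together with the endpoints $\beta_1=\pm\pi$, where $\tan\frac{\beta_1}{2}$ itself blows up), splits into the cases $\alpha=\pi$; $\alpha\neq\pi,\lambda=\tfrac12$ (with a further split on $\alpha=\tfrac{\pi}{2}$); $\alpha\neq\pi,\lambda\neq\tfrac12$, and in each case checks that the one-sided limits of $2\arctan$ of \eqref{2f1} at the offending point agree modulo $2\pi$ with a solution of \eqref{im2} there. Your homogenization $A(\beta_1)\sin\frac{\beta_2}{2}+B(\beta_1)\cos\frac{\beta_2}{2}=0$ replaces all of these limit computations with one structural observation: when the determinant $-1+2\lambda(1-\cos\alpha)$ is nonzero, $(A,B)$ is the image of the nowhere-vanishing vector $(\sin\frac{\beta_1}{2},\cos\frac{\beta_1}{2})$ under an invertible map, hence never vanishes, so a continuous phase lift $\delta$ exists and $\beta_2=-2\delta(\beta_1)+2m\pi$ parametrizes the solution set continuously, with the endpoints $\beta_1=\pm\pi$ treated on the same footing as the interior discontinuity point. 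What the paper's approach buys is explicitness (it exhibits the limiting values $\beta_2\to\pm\pi$ or $t_2\to t_2'$ concretely); what yours buys is uniformity and a cleaner organizing principle for the degenerate configurations, which occur exactly when $\lambda\geq\tfrac14$ and $\cos\alpha=(2\lambda-1)/(2\lambda)$ and always yield a constant $f$. In fact your degenerate branch is more reliable there: for $\lambda=\tfrac12,\alpha=\tfrac{\pi}{2}$, Eq.~\eqref{eq:intro_fxr_alpha_im} reduces to $\cos\frac{\beta_2}{2}\bigl(\sin\frac{\beta_1}{2}+\cos\frac{\beta_1}{2}\bigr)=0$, i.e.\ $\beta_2\equiv\pm\pi$ as your analysis predicts, whereas the paper's case 2(a) reports $\beta_2=-\beta_1+2l\pi$ because its displayed equation (im2-2) mis-transcribes the term $c_1c_2s_{2\alpha}$ of Eq.~\eqref{im2} as $c_1s_2\sin\alpha$; continuity holds either way, so the lemma is unaffected. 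Two very minor points in your write-up: in the degenerate case $A$ can vanish at two points of $[-\pi,\pi]$ rather than one (e.g.\ at both endpoints when $(1-2\lambda)\sin\alpha=0$, as happens for $\lambda=\tfrac14,\alpha=\pi$), and the row of $M$ producing $B$ may be the zero row (same example, giving $c_0=0$ and $\tan\frac{\beta_2}{2}=0$); neither affects the conclusion that $f$ is constant there.
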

\begin{proof}
We copy Eq.~\eqref{2f1} in Section~\ref{subsec:continuous} here for convenience:
\begin{equation}
    t_2 = -\frac{t_1\big[ (1-2\lambda) c_{2\alpha} + 2\lambda \big] + s_{2\alpha}}{c_{2\alpha} - t_1 s_{2\alpha} (1-2\lambda)}. \tag{2f1} \label{2f1.1}
\end{equation}
Note that Eq.~\eqref{eq:intro_fxr_alpha_im} is equivalent to:
\begin{equation}
    s_2 \big[ c_1 c_{2\alpha} - s_1 s_{2\alpha} (1-2\lambda) \big] = -s_1 c_2 \big[ (1-2\lambda) c_{2\alpha} + 2\lambda \big] - c_1 c_2 s_{2\alpha}. \tag{im 2} \label{im2}
\end{equation}
The potential `discontinuity point' of $f(\beta_1)$ is the solution of the equation $t_1=\frac{c_{2\alpha}}{s_{2\alpha}(1-2\lambda)}$. To avoid the denominator of its RHS become zero, the following first two cases need to be considered first.
\begin{enumerate}
    \item $\alpha=\pi$. Then $s_{2\alpha}=\sin\alpha=0, c_{2\alpha}=-1$, so Eq.~\eqref{2f1.1} becomes $t_2=-(1-4\lambda)t_1$, and we denote it by Eq.~(2f1-1). Eq.~\eqref{im2} now also simplifies to $s_2 c_1 = -(1-4\lambda) s_1 c_2$ (im2-1). When $\beta_1\in(-\pi,\pi)$, function $\beta_2=f(\beta_1)$ obtained by taking $\arctan$ of Eq.~(2f1-1) is continuous, so we only need to consider the `discontinuity point' $\beta_1=\pm\pi$ of Eq.~(2f1-1). We have $c_1=0, s_1=1$. Substituting them into Eq.~(im2-1), we have $0=(1-4\lambda)\cos(\beta_2/2)$, which has solutions $\beta_2=\pm \pi$. Moreover, $\beta_2=f(\beta_1)\to\pm\pi$ when $\beta_1\to \pm\pi$. Therefore $f$ is inherently continuous at $\beta_1=\pm\pi$.
    \item $\alpha\neq\pi, \lambda=\frac{1}{2}$. Then Eq.~\eqref{2f1.1} becomes $t_2 = -\frac{t_1 + \sin\alpha}{\cos\alpha}$ (2f1-2) and Eq.~\eqref{im2} becomes $s_2 c_1 \cos\alpha= - s_1 c_2 - c_1 s_2 \sin\alpha$ (im2-2). Note that the denominator of Eq.~(2f1-2) is $\cos\alpha$, so we will need to discuss whether $\alpha=\pi/2$: 
    \begin{enumerate}
        \item $\alpha=\pi/2$. Then Eq.~(im2-2) becomes $0=\sin(\frac{\beta_1+\beta_2}{2})$. Solving it produce $\beta_2=f(\beta_1)=-\beta_1+2l\pi$ for $l\in\mathbb{Z}$, so $f$ is continuous.
        \item $\alpha \neq \pi/2$. Consider the `discontinuity point' $\beta_1=\pm\pi$ of Eq.~(2f1-2). Substituting $\beta_1=\pm\pi$ into Eq.~(im2-2), we obtain $0=c_2$, so $\beta_2=\pm\pi$. Moreover, the function $\beta_2=f(\beta_1)$ obtained by taking $\arctan$ of Eq.~(2f1-2) will $\to\pm\pi$ when $\beta_1\to\pm\pi$. Therefore $f$ is inherently continuous at $\beta_1=\pm\pi$.
    \end{enumerate}
    \item $\alpha\neq\pi, \lambda\neq\frac{1}{2}$. We first consider the `discontinuity point' which cause the denominator of Eq.~\eqref{2f1.1} to become zero, then we have $t_1=\frac{c_{2\alpha}}{s_{2\alpha}(1-2\lambda)}$ (suppose by taking $\arctan$ of both sides, its solution is the `discontinuity point' $\beta_1'$), which is equivalent to $c_1 c_{2\alpha} - s_1 s_{2\alpha} (1-2\lambda)=0$. Substituting it into Eq.~\eqref{im2}, we have $0 = s_1 c_2 \big[ (1-2\lambda) c_{2\alpha} + 2\lambda \big] + c_1 c_2 s_{2\alpha}$, which has solution $\beta_2=\pm\pi$. Moreover, the function $\beta_2=f(\beta_1)$ obtained by taking $\arctan$ of Eq.~\eqref{2f1.1} will $\to\infty$ (so $\beta_2\to\pm\pi$) when $\beta_1 \to \beta_1'$. Therefore $f$ is inherently continuous at $\beta_1'$. 
    Next, we consider the `discontinuity point' $\beta_1=\pm\pi$ of Eq.~\eqref{2f1.1}. Now $c_1=0, s_1=1$. Substituting them into Eq.~\eqref{im2}, we have $ s_2 s_{2\alpha}(1-2\lambda) = c_2 \big[ (1-2\lambda) c_{2\alpha} + 2\lambda \big] $, which has solution $t_2' := \frac{(1-2\lambda) c_{2\alpha} + 2\lambda}{s_{2\alpha}(1-2\lambda)}$. Moreover, from Eq.~\eqref{2f1.1}, $t_2\to t_2'$ when $\beta_1\to\pi$, so $f$ is also continuous at $\beta_1=\pm\pi$.
\end{enumerate}
\end{proof}
\end{appendices}

\end{document}